\documentclass[reqno,centertags,draft]{amsart}
\usepackage{amsmath,amsthm,amscd,amssymb,latexsym,upref}
\date{\today} 



\newcommand{\bbN}{{\mathbb{N}}}
\newcommand{\bbR}{{\mathbb{R}}}

\newcommand{\bbZ}{{\mathbb{Z}}}
\newcommand{\bbC}{{\mathbb{C}}}

\newcommand{\cM}{{\mathcal M}}


\newcommand{\beq}{\begin{equation}}
\newcommand{\eeq}{\end{equation}}
\newcommand{\ba}{\begin{align}}
\newcommand{\ea}{\end{align}}
\newcommand{\bi}{\bibitem}
\newcommand{\no}{\notag}
\newcommand{\lb}{\label}
\newcommand{\f}{\frac}
\newcommand{\bs}{\backslash}

\newcommand{\ol}{\overline}

\newcommand{\wti}{\widetilde}

\newcommand{\st}{\,|\,}

\newcommand{\loc}{\text{\rm{loc}}}

\newcommand{\Arc}{\text{\rm{Arc}}}

\newcommand{\ess}{\text{\rm{ess}}}

\newcommand{\supp}{\text{\rm{supp}}}

\newcommand{\Om}{\Omega}
\newcommand{\om}{\omega}
\newcommand{\si}{\sigma}

\newcommand{\al}{\alpha}
\newcommand{\be}{\beta}
\newcommand{\Ga}{\Gamma}
\newcommand{\ga}{\gamma}
\newcommand{\De}{\Delta}

\newcommand{\Te}{\Theta}
\newcommand{\te}{\theta}
\newcommand{\ze}{\zeta}

\newcommand{\C}{\mathbb{C}}

\newcommand{\R}{\mathbb{R}}
\newcommand{\D}{\mathbb{D}}
\newcommand{\dD}{{\partial\hspace*{.2mm}\mathbb{D}}}
\newcommand{\Z}{\mathbb{Z}}
\newcommand{\N}{\mathbb{N}}

\newcommand{\U}{\mathbb{U}}
\newcommand{\V}{\mathbb{V}}
\newcommand{\W}{\mathbb{W}}

\newcommand{\Cm}{{\mathbb{C}^{m\times m}}}


\renewcommand{\Re}{\text{\rm Re}}
\renewcommand{\Im}{\text{\rm Im}}
\renewcommand{\ln}{\text{\rm ln}}


\newcommand{\abs}[1]{\left\lvert#1\right\rvert}
\newcommand{\norm}[1]{\left\Vert#1\right\Vert}

\newcommand{\ltm}[1]{{\ell^2(#1)^m}}

\newcommand{\ltmm}[1]{{\ell^2(#1)^{m\times m}}}

\newcommand{\Ltm}[1]{{L^2(\dD;d\Om_{#1}(\cdot,k_0))}}


\allowdisplaybreaks
\numberwithin{equation}{section}

\theoremstyle{plain}
\newtheorem{theorem}{Theorem}[section]

\newtheorem{hypothesis}[theorem]{Hypothesis}
\theoremstyle{definition}
\newtheorem{definition}[theorem]{Definition}
\newtheorem{remark}[theorem]{Remark}


\begin{document}

\title[Trace Formulas and a Borg-Type Theorem for CMV Operators]
{Trace Formulas and a Borg-type Theorem for CMV Operators with
Matrix-valued Coefficients}
\author[M.\ Zinchenko]{Maxim Zinchenko}
\address{Department of Mathematics,
California Institute of Technology, Pasadena, CA 91125, USA}
\email{maxim@caltech.edu}
\urladdr{http://www.math.caltech.edu/\textasciitilde maxim}

\subjclass[2000]{Primary 47B36, 34A55, 47A10;  Secondary 34L40.}
\keywords{Borg-type theorem, inverse spectral problem,
reflectionless, block CMV matrices, matrix-valued Verblunsky
coefficients}


\begin{abstract}
We prove a general Borg-type inverse spectral result for a
reflectionless unitary CMV operator (CMV for Cantero, Moral, and
Vel\'azquez \cite{CMV03}) associated with matrix-valued Verblunsky
coefficients. More precisely, we find an explicit formula for the
Verblunsky coefficients of a reflectionless CMV matrix whose spectrum
consists of a connected arc on the unit circle. This extends a recent
result \cite{GZ06} for CMV operators with scalar-valued coefficients.

In the course of deriving the Borg-type result we also use
exponential Herglotz representations of Caratheodory matrix-valued
functions to prove an infinite sequence of trace formulas connected
with CMV operators.
\end{abstract}


\maketitle

\section{Introduction}\label{s1}

The aim of this paper is to prove a Borg-type uniqueness theorem for
a special class of unitary doubly infinite block-five-diagonal
matrices, that is, doubly infinite CMV matrices associated with
matrix-valued Verblunsky coefficients. The actual history of CMV
matrices (with scalar coefficients) is quite interesting: The
corresponding unitary semi-infinite five-diagonal matrices were
first introduced in 1991 by Bunse--Gerstner and Elsner \cite{BGE91},
and subsequently discussed in detail by Watkins \cite{Wa93} in 1993
(cf.\ also the recent discussion in Simon \cite{Si06}). They were
subsequently rediscovered by Cantero, Moral, and Vel\'azquez (CMV)
in \cite{CMV03}. In \cite[Sects.\ 4.5, 10.5]{Si04}, Simon introduced
the corresponding notion of unitary doubly infinite five-diagonal
matrices and coined the term ``extended'' CMV matrices. For
simplicity, we will just speak of CMV operators whether or not they
are half-lattice or full-lattice operators. We also note that in a
context different from orthogonal polynomials on the unit circle,
Bourget, Howland, and Joye \cite{BHJ03} introduced a family of
doubly infinite matrices with three sets of parameters which, for
special choices of the parameters, reduces to two-sided CMV matrices
on $\bbZ$. Moreover, it is possible to connect unitary block Jacobi
matrices to the trigonometric moment problem (and hence to CMV
matrices) as discussed by Berezansky and Dudkin \cite{BD05},
\cite{BD06}. CMV operators with matrix-valued coefficients were
recently discussed in \cite{CGZ07}, \cite{DKS08}, \cite{DPS08},
\cite{Si06}.

The relevance of CMV operators, more precisely, half-lattice CMV
operators is derived from its intimate relationship with the
trigonometric moment problem and hence with finite measures on the
unit circle $\dD$. For a detailed account of the relationship of
half-lattice CMV operators with orthogonal polynomials on the unit
circle we refer to the monumental two-volume treatise by Simon
\cite{Si04} (see also \cite{Si05}, \cite{Si06}) and the exhaustive
bibliography therein. For classical results on orthogonal
polynomials on the unit circle we refer, for instance, to
\cite{Ak65}, \cite{Ge46}--\cite{Ge61}, \cite{Kr45},
\cite{Sz20}--\cite{Sz78}, \cite{Ve35}, \cite{Ve36}. More recent
references relevant to the spectral theoretic content of this paper
are \cite{CGZ07}, \cite{GJ96}--\cite{GT94}, \cite{GZ06},
\cite{GZ06a}, \cite{GN01}, \cite{Lu04}, \cite{PY04}, \cite{Si04a}.
Moreover, the full-lattice CMV operators are closely related to an
important, and only recently intensively studied, completely
integrable nonabelian version of the defocusing nonlinear
Schr\"odinger equation (continuous in time but discrete in space), a
special case of the Ablowitz--Ladik system. Relevant references in
this context are, for instance, \cite{AL75}--\cite{APT04},
\cite{GGH05}, \cite{GH05}--\cite{GHMT08b}, \cite{KN07}, \cite{Li05},
\cite{MEKL95}--\cite{Ne06}, \cite{Sc89}, \cite{Ve99}, and the
literature cited therein. We emphasize that the case of
matrix-valued Verblunsky coefficients is considerably less studied
than the case of scalar coefficients, but we refer to \cite{CGZ07},
\cite{DKS08}, \cite{DPS08}, \cite{Si06}.

From the outset, Borg-type theorems are inverse spectral theory
assertions which typically prescribe a connected interval (or arc)
as the spectrum of a self-adjoint (or unitary) differential or
difference operator, and under a reflectionless condition imposed on
the operator (one may think of a periodicity condition on the
(potential) coefficients of the differential or difference operator)
infers an explicit form of the coefficients of the operator in
question. Typically, the form of the coefficients determined in this
context is fairly simple and usually given by constants or functions
of exponential type.

Next, we briefly describe the history of Borg-type theorems relevant
to this paper. In 1946, Borg \cite{Bo46} proved, among a variety of
other inverse spectral theorems, the following result for
one-dimensional Schr\"odinger operators. (Throughout this paper we
denote by $\sigma(\cdot)$ and $\sigma_{\ess}(\cdot)$ the spectrum
and essential spectrum of a densely defined closed linear operator
in a complex separable Hilbert space.)

\begin{theorem}[\cite{Bo46}] \lb{t1.1}
Let $q\in L^1_{\loc} (\bbR)$ be real-valued and periodic,
$H=-\f{d^2}{dx^2}+q$ be the associated self-adjoint Schr\"odinger
operator in $L^2(\bbR)$, and suppose that
\begin{equation}
\sigma(H)=[e_0,\infty) \, \text{ for some $e_0\in\bbR$.}
\end{equation}
Then $q$ is of the form,
\begin{equation}
q(x)=e_0 \, \text{ for a.e.\ $x\in\bbR$}. \lb{1.5}
\end{equation}
\end{theorem}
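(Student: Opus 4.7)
The plan is to combine Floquet--Bloch theory for the periodic Schr\"odinger operator with a classical trace formula for periodic potentials. The assumed half-line spectrum should force every spectral gap to close, and the trace formula should then collapse to $q\equiv e_0$.

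First I would fix a period $a>0$ of $q$ and form the Floquet discriminant $\Delta(E)=\tr M(a,E)$, where $M(a,E)$ is the monodromy matrix over one period for $-\psi''+q\psi=E\psi$. Standard Floquet theory gives
\[
\sigma(H)=\{E\in\bbR : |\Delta(E)|\le 2\},
\]
and each connected component of this set is a closed band $[E_{2k},E_{2k+1}]$, with gaps opening precisely at simple zeros of $\Delta(E)\mp 2$. Comparing this description with the hypothesis $\sigma(H)=[e_0,\infty)$, one sees that $e_0=E_0$ is the bottom periodic eigenvalue and that \emph{every} spectral gap is closed, i.e.\ $E_{2k-1}=E_{2k}$ for all $k\ge 1$ (equivalently, all periodic/antiperiodic eigenvalues above $E_0$ are doubly degenerate).

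Next I would bring in the Dirichlet eigenvalues $\{\mu_k(x)\}_{k\ge 1}$ of the restriction to the period cell $[x,x+a]$. The classical interlacing result places $\mu_k(x)$ in the $k$-th closed gap $[E_{2k-1},E_{2k}]$; since each such gap degenerates to a single point by the previous step, $\mu_k(x)=E_{2k-1}=E_{2k}$ for every $k\ge 1$ and every $x$. At this point I would invoke the Hochstadt--Flaschka--McKean--Trubowitz trace formula
\[
q(x)=E_0+\sum_{k=1}^{\infty}\bigl(E_{2k-1}+E_{2k}-2\mu_k(x)\bigr),
\]
in which every term on the right is now zero, yielding $q(x)=e_0$ for a.e.\ $x\in\bbR$, as required.

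The main obstacle is regularity: the trace formula is customarily proved under the assumption $q\in C^1$ (or better), while here only $q\in L^1_{\loc}(\bbR)$ is available. Two ways to bypass this are reasonable. One is to observe that a periodic potential with all spectral gaps closed is automatically real-analytic (through the finite-gap machinery, the vanishing of all KdV invariants forces analyticity), which legitimizes the classical derivation. The more robust route, closer in spirit to the rest of the paper, is to work directly from the high-energy asymptotic expansion of the diagonal Green's function, or equivalently from an exponential Herglotz representation of the half-line Weyl--Titchmarsh $m$-functions $m_\pm(z,x)$: closed gaps express the reflectionless condition $m_+(E+i0,x)=-\overline{m_-(E+i0,x)}$ a.e.\ on $\sigma(H)$, which together with the Herglotz property and the spectrum $[e_0,\infty)$ forces $m_\pm(z,x)=\pm i\sqrt{z-e_0}$. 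Reading off $q(x)=e_0$ then follows from matching the first nontrivial coefficient in the asymptotic expansion of $m_\pm$ at $z\to i\infty$, and requires no smoothness.
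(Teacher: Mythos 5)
Theorem \ref{t1.1} is quoted from Borg's paper and is not reproved here; what the present paper does instead is isolate, following \cite{CGHL00}, the mechanism it considers essential: periodicity only matters insofar as it makes $H$ reflectionless in the sense of \eqref{1.6}, and the conclusion \eqref{1.5} then follows from an exponential Herglotz representation of the diagonal Green's function together with the trace formula of \cite{GS96} (this is precisely the strategy the paper transplants to CMV operators in Sections \ref{s2}--\ref{s4}). Your second route is this argument: the condition $\xi(\lambda,x)=1/2$ a.e.\ on $[e_0,\infty)$, combined with $\xi(\lambda,x)=0$ below $e_0$ and the universal high-energy asymptotics of $g(z,x)$, pins down $g(z,x)=i/\bigl(2\sqrt{z-e_0}\,\bigr)$ independently of $x$, and the trace formula $q(x)=e_0+\int_{e_0}^{\infty}\bigl(1-2\xi(\lambda,x)\bigr)\,d\lambda$ gives \eqref{1.5} with no smoothness beyond $L^1_{\loc}$. (Phrasing the conclusion in terms of the individual $m_\pm$ is a cosmetic variant; the clean object is $g=1/(m_--m_+)$, and you should check your sign convention in the reflectionless identity against whichever of $\pm m_\pm$ you take to be Herglotz.) Your primary route --- Floquet discriminant, interlacing of the Dirichlet eigenvalues $\mu_k(x)$ into the closed gaps, and the Hochstadt--Flaschka--McKean--Trubowitz trace formula --- is a genuinely different and classically valid proof, but it is also where the one real weak point sits: that trace formula is established for smooth (at least $L^2$ or $C^1$) periodic potentials, and your proposed repair via ``all gaps closed $\Rightarrow$ $q$ real-analytic'' leans on gap-length characterizations of regularity that are themselves proved under hypotheses stronger than $q\in L^1_{\loc}$, so as written that patch is circular in spirit. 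The honest resolution is the one you give second; I would promote it to the main argument and keep the Floquet picture only as motivation for why all gaps close and why $H$ is reflectionless.
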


Traditionally, uniqueness results such as
Theorem\ \ref{t1.1} are called Borg-type theorems. However, this
terminology is not uniquely adopted  and hence a bit unfortunate.
Indeed, inverse spectral results on finite intervals in which the
coefficient(s) in the underlying differential or difference expression are
recovered from two spectra, were also pioneered by Borg in his celebrated
paper \cite{Bo46}, and hence are also coined Borg-type theorems in the
literature, see, for instance, \cite{Ma94}, \cite{Ma99a}.

A closer examination of the proof of Theorem\ \ref{t1.1} in
\cite{CGHL00} shows that periodicity of $q$ is not the point for the
uniqueness result \eqref{1.5}. The key ingredient (besides
$\sigma(H)=[e_0,\infty)$ and $q$ real-valued) is the fact that
\begin{equation}
\text{for all $x\in \bbR$, } \, \xi(\lambda,x)=1/2 \, \text{ for a.e.\
$\lambda\in\sigma_{\ess}(H)$.} \lb{1.6}
\end{equation}
Here $\xi(\lambda,x)$, the argument of the boundary value
$g(\lambda+i0,x)$ of the diagonal Green's function of $H$ on the real
axis (where $g(z,x)=(H-zI)^{-1}(x,x)$, $z\in\bbC\backslash\sigma(H)$,
$x\in\R$), is defined by
\begin{equation}
\xi(\lambda,x)=\pi^{-1}\lim_{\varepsilon\downarrow 0}
\Im(\ln(g(\lambda+i\varepsilon,x))) \, \text{
for a.e.\ $\lambda\in\bbR$ and all $x\in\R$}. \lb{1.7}
\end{equation}

Real-valued periodic potentials are known to satisfy \eqref{1.6},
but so do certain classes of real-valued quasi-periodic and
almost-periodic potentials $q$. In particular, the class of
real-valued algebro-geometric finite-gap KdV potentials $q$ (a
subclass of the set of real-valued quasi-periodic  potentials) is a
prime example satisfying \eqref{1.6} without necessarily being
periodic. Traditionally, potentials $q$ satisfying \eqref{1.6} are
called \textit{reflectionless} (see \cite{BGMS05}, \cite{CG02},
\cite{CGHL00}, \cite{CGR05}, and the references therein).

The extension of Borg's Theorem\ \ref{t1.1} to periodic
matrix-valued Schr\"odinger operators was first proved by D\'epres
\cite{De95}. A new strategy of the proof based on exponential
Herglotz representations and a trace formula (cf.\ \cite{GS96}) for
such potentials, as well as the extension to reflectionless
matrix-valued potentials, was obtained in \cite{CGHL00}.

The direct analog of Borg's Theorem\ \ref{t1.1} for periodic Jacobi
operators was proved by Flaschka \cite{Fl75} in 1975.

\begin{theorem}  [\cite{Fl75}] \lb{t1.2}
Suppose $a=\{a_k\}_{k\in\bbZ}$ and $b=\{b_k\}_{k\in\bbZ}$
are  periodic real-valued sequences in
$\ell^\infty(\bbZ)$ with the same period and $a_k>0$, $k\in\bbZ$. Let
$H=aS^+ +a^-S^- +b$ be the associated self-adjoint  Jacobi operator
on $\ell^2(\bbZ)$ and suppose that
\begin{equation}
\sigma(H)=[E_-,E_+] \, \text{ for some $E_-<E_+$.}
\end{equation}
Then $a=\{a_k\}_{k\in\bbZ}$ and $b=\{b_k\}_{k\in\bbZ}$ are of the form,
\begin{equation}
a_k=(E_+-E_-)/4 , \quad  b_k=(E_-+E_+)/2, \quad k\in\bbZ.  \lb{1.8}
\end{equation}
\end{theorem}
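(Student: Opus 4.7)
The plan is to prove Theorem \ref{t1.2} by mimicking the modern strategy outlined in the excerpt for the Schr\"odinger case (as in \cite{CGHL00}): namely, deduce the reflectionless property from the single-band spectral hypothesis, derive an explicit closed form for the diagonal Green's function via its exponential Herglotz representation, and finally extract $a_k$ and $b_k$ from trace formulas obtained by matching large-$z$ asymptotics.

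First, I would establish that $H$ is reflectionless. By Floquet--Bloch theory for the periodic Jacobi operator $H$, the spectrum is the set where the discriminant $\Delta(z) = \tr T(z)$ (with $T(z)$ the monodromy matrix of one period) satisfies $|\Delta(z)| \leq 2$. The hypothesis $\sigma(H) = [E_-, E_+]$ forces all potentially open spectral gaps to close, so the underlying hyperelliptic Riemann surface degenerates to genus zero. A standard computation then shows that the quasi-momentum $\xi(\lambda, k) = \pi^{-1} \lim_{\varepsilon \downarrow 0} \Im \ln g(\lambda + i\varepsilon, k)$ of the diagonal Green's function $g(z,k) = (H - zI)^{-1}(k,k)$ equals $1/2$ for a.e.\ $\lambda \in [E_-, E_+]$ and all $k \in \bbZ$; equivalently, $H$ is reflectionless.

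Next, I would use the exponential Herglotz representation of $g(\dott, k)$. Since $g(\dott,k)$ is a Herglotz function analytic off $[E_-, E_+]$ with argument $\pi/2$ a.e.\ on this interval, the exponential Herglotz formula (cf.\ \cite{GS96}, \cite{CGHL00}) yields
\beq
g(z, k) = -\big[(z - E_-)(z - E_+)\big]^{-1/2}, \quad z \in \bbC \setminus [E_-, E_+],
\eeq
with the branch fixed so that $g(z,k) \sim -1/z$ as $z \to \infty$. Crucially, the right-hand side is independent of $k$. Expanding both sides in a Neumann/moment series in $1/z$ gives the trace formulas I need: the coefficient of $z^{-2}$ yields $b_k = (E_- + E_+)/2$, while the coefficient of $z^{-3}$ combined with the well-known moment identity $g(z,k) = -z^{-1} - b_k z^{-2} - (b_k^2 + a_k^2 + a_{k-1}^2) z^{-3} - \Oh(z^{-4})$ yields $a_k^2 + a_{k-1}^2 = (E_+ - E_-)^2 / 8$.

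To separate $a_k$ from $a_{k-1}$, I would invoke a second trace formula coming from the off-diagonal Green's function $G(z, k, k+1)$, whose reflectionless representation is likewise explicit and $k$-independent up to sign; comparing its leading large-$z$ asymptotics against the known expansion $G(z,k,k+1) = -a_k/z^2 + \Oh(z^{-3})$ gives $a_k^2 = (E_+ - E_-)^2/16$, hence $a_k = (E_+ - E_-)/4$ for all $k$ (using $a_k > 0$), which completes the proof. The main obstacle I anticipate is the justification of the reflectionless property from the single-band spectrum in the periodic Jacobi setting and the precise handling of branches in the exponential Herglotz representation; once this is in place the trace formulas reduce to routine expansions.
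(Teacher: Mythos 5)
A preliminary remark: the paper does not prove Theorem \ref{t1.2} at all --- it is quoted from Flaschka \cite{Fl75} purely as historical background --- so the only meaningful comparison is with the strategy the paper uses for its own CMV analogues, Theorems \ref{t4.1} and \ref{t5.3}. Your plan (reflectionless property, exponential Herglotz representation of the diagonal Green's function, trace formulas from matching large-$z$ expansions) is exactly that strategy transplanted to the Jacobi setting, and your first two steps are sound: every periodic Jacobi operator is reflectionless, the single-band hypothesis forces $\xi(\lambda,k)=1/2$ on $[E_-,E_+]$, $0$ below and $1$ above, and the exponential representation then gives the $k$-independent formula $g(z,k)=-[(z-E_-)(z-E_+)]^{-1/2}$. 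Your first two trace formulas are computed correctly and yield $b_k=(E_-+E_+)/2$ and $a_k^2+a_{k-1}^2=(E_+-E_-)^2/8$.

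The gap is in your final step. The reflectionless condition constrains the diagonal Green's function (equivalently $m_\pm$ on the spectrum); it does \emph{not} deliver an explicit, $k$-independent formula for $G(z,k,k+1)$. Indeed, with $h(z,k)=2a_kG(z,k,k+1)-1$ the Wronskian identity gives $h(z,k)^2=1+4a_k^2\,g(z,k)g(z,k+1)$, so even after inserting the explicit $g$ one gets $G(z,k,k+1)=\bigl(1-\sqrt{1+4a_k^2g(z)^2}\,\bigr)/(2a_k)$, which still carries the unknown $a_k$; matching its expansion against $G(z,k,k+1)=-a_kz^{-2}+\Oh(z^{-3})$ returns the tautology $a_k^2=a_k^2$, and the same happens at the next order. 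Claiming the off-diagonal Green's function is ``$k$-independent up to sign'' is essentially assuming the conclusion, since from $a_k^2+a_{k-1}^2=\mathrm{const}$ alone a period-two pattern $a_{2k}\equiv A$, $a_{2k+1}\equiv B$ with $A\neq B$ survives. Three ways to close this: (a) push the \emph{diagonal} expansion one order further --- the $z^{-5}$ coefficient of $-[(z-E_-)(z-E_+)]^{-1/2}$ gives, after using the first two trace formulas, $a_k^2a_{k+1}^2+a_{k-1}^2a_{k-2}^2=(E_+-E_-)^4/128$, hence $a_ka_{k+1}=(E_+-E_-)^2/16$ and $(a_k-a_{k+1})^2=a_k^2+a_{k+1}^2-2a_ka_{k+1}=0$; this is precisely the mechanism of the paper's proof of Theorem \ref{t5.3}, where the second trace formula \eqref{2.19a} combined with the first \eqref{2.19} yields $\bigl(\al_{k+1}+e^{i(\te_0+\te_1)/2}\al_k\bigr)^*\bigl(\al_{k+1}+e^{i(\te_0+\te_1)/2}\al_k\bigr)=0$; (b) use the periodicity hypothesis: the period-two discriminant shows $\sigma(H)$ has a gap of length $2|A-B|$ centered at $(E_-+E_+)/2$, forcing $A=B$; or (c) determine $m_\pm(z,k)$ themselves (not merely $g$) from reflectionlessness and feed them into the Riccati recursion $a_k^2m_+(z,k+1)=b_k-z-m_+(z,k)^{-1}$, the analogue of the paper's use of \eqref{3.8}--\eqref{3.8a} in Theorem \ref{t4.1} (cf.\ also \cite{Te98}, \cite{Te00}).
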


Here $S^{\pm}$ denote the shift operators $S^{\pm}f=f^\pm=f(\cdot\pm
1)$, $f\in\ell^\infty(\bbZ)$.

The extension of Theorem\ \ref{t1.2} to reflectionless scalar Jacobi
operators is due to Teschl \cite[Corollary\ 6.3]{Te98} (see also
\cite[Corollary\ 8.6]{Te00}). The extension of Theorem\ \ref{t1.2}
to matrix-valued reflectionless Jacobi operators (and a
corresponding result for Dirac-type difference operators) has
recently been obtained in \cite{CGR05}.

The first analog of Borg-type theorem for CMV operators with
periodic scalar coefficients and spectrum filling out the whole unit
circle was obtained by Simon in \cite[Sect.\ 11.14]{Si04}. It was
recently extended in \cite{GZ06} to reflectionless CMV operators
with scalar Verblunsky coefficients and spectra given by a
connected arc on the unit circle:

\begin{theorem}[\cite{GZ06}] \lb{t1.3}
Let $\alpha=\{\alpha_k\}_{k\in\bbZ}\subset\D$ denote Verblunsky
coefficients associated with a reflectionless CMV operator $\U$
$($cf. \eqref{2.1}--\eqref{2.3}$)$ on $\ell^2(\bbZ)$. Suppose that
the spectrum of $\U$ consists of a connected arc on $\dD$,
\begin{equation}
\si(\U)=\Arc\big(\big[e^{i\theta_0},e^{i\theta_1}\big]\big), \quad
\theta_0 \in [0,2\pi), \; \theta_0<\theta_1\leq\theta_0+2\pi.
\end{equation}
Then $\alpha=\{\alpha_k\}_{k\in\bbZ}$ is of the form,
\begin{equation}
\alpha_k=\alpha_0 g^k, \quad k\in\bbZ,
\end{equation}
where
\begin{equation}
g=-\exp(i(\theta_0+\theta_1)/2) \, \text{ and } \,
|\alpha_0|=\cos((\theta_1-\theta_0)/4).
\end{equation}
\end{theorem}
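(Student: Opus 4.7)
The plan is to follow the general exponential-Herglotz / trace-formula strategy used in \cite{CGHL00} and \cite{CGR05} for Schr\"odinger and Jacobi operators, now adapted to the unitary CMV setting (this is exactly the two-step program advertised in the abstract). The two ingredients are the exponential Herglotz representation of the diagonal Carath\'eodory function $M(\cdot,k)$ associated with $\U$ at the site $k\in\bbZ$, together with the infinite family of trace formulas that the paper derives earlier by matching Taylor coefficients in that representation.

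First I would fix $k\in\bbZ$ and work with the site-$k$ Carath\'eodory function $M(z,k)$ obtained by gluing the two half-line Weyl--Titchmarsh functions of $\U$ at $k$. Its exponential Herglotz representation
\begin{equation}
\log M(z,k) = ic(k) + \oint_{\dD} \f{\zeta+z}{\zeta-z}\,\Xi(\zeta,k)\, \f{d\zeta}{2\pi i \zeta}, \quad z\in\bbD,
\end{equation}
encodes all boundary data in the argument function $\Xi(\dott,k)\in[0,1]$. The key analytic input is that the reflectionless hypothesis forces $\Xi(\zeta,k)=\f12$ for a.e.\ $\zeta$ on the spectral arc $\Arc([e^{i\te_0},e^{i\te_1}])$; on the complementary gap arc $M(\dott,k)$ extends analytically with real boundary values, so there $\Xi(\dott,k)$ is integer-valued and (by connectedness of the gap) equal to some constant $n(k)\in\{0,1\}$.

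Once $\Xi(\dott,k)$ is pinned down as a piecewise constant function on $\dD$, I would substitute it into the trace formulas derived earlier in the paper. These formulas express $\alpha_k$, and more generally polynomial combinations of $\al_k,\al_{k+1},\dots$, as explicit $\zeta$-moments of $\Xi(\zeta,k)$ against measures of the form $\zeta^n \, d\zeta/\zeta$. With $\Xi(\dott,k)$ piecewise constant on a single arc, each moment collapses to an elementary trigonometric integral in $\te_0,\te_1$. The first trace formula already yields $|\al_k|=\cos((\te_1-\te_0)/4)$ (in particular $|\al_k|$ is independent of $k$), while the next one identifies $\al_{k+1}/\al_k = g = -\exp(i(\te_0+\te_1)/2)$ once one tracks the $k$-dependence of the constant $ic(k)$ and of the integer $n(k)$. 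Iterating gives $\al_k=\al_0 g^k$.

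The main obstacle, and the step requiring genuinely new work in the CMV context, is establishing that ``reflectionless'' translates precisely into $\Xi(\zeta,k)=\f12$ a.e.\ on $\si(\U)$. For Schr\"odinger and Jacobi operators this rests on the symmetry $g_+(\la+i0,x)=-\ol{g_-(\la+i0,x)}$ on the essential spectrum; for CMV one must verify the unitary analog that the two half-line Carath\'eodory functions $M_\pm(\dott,k)$ satisfy $M_+(\zeta,k) = -\ol{M_-(\zeta,k)}$ on the spectral arc, so that their site-$k$ gluing has purely imaginary boundary values there, equivalently $\arg M(\zeta+0,k)=\pi/2$. A secondary, routine but delicate point is the bookkeeping of the phase constant $c(k)$ and of the branch of $\log M$, which is what actually produces the argument of $\al_k$ rather than only its modulus; this is where the specific value $g=-\exp(i(\te_0+\te_1)/2)$, including the crucial minus sign, emerges.
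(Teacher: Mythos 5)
Your overall strategy (exponential Herglotz representation of the diagonal Caratheodory function plus the trace formulas) is exactly the one the paper uses for the matrix-valued generalization in Theorem \ref{t5.3}, and your identification of the reflectionless condition with the statement that $M_{1,1}(\cdot,k)$ has strictly positive boundary values on the spectral arc (so that its argument function vanishes there, i.e.\ $\Xi_{1,1}=0$ in the normalization \eqref{2.7}, the analogue of your $\xi=1/2$) is correct in substance. The genuine gap is in your description of the argument function on the complementary arc $\dD\backslash\si(\U)$. You claim that there the boundary values are real (in the Caratheodory normalization: purely imaginary), hence the argument is ``integer-valued'' and, by connectedness of the gap, equal to a single constant $n(k)$. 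This is false, and it is incompatible with your own normalization: since $M_{1,1}(0,k)=1$, the representation forces $\oint_{\dD}d\mu_0(\zeta)\,\Xi_{1,1}(\zeta,k)=0$ (cf.\ \eqref{2.9}); with $\Xi_{1,1}=0$ on the spectrum, a constant value $\pm\pi/2$ on a nonempty gap would violate this identity. What actually happens is that $-iM_{1,1}(e^{i\te},k)$ is real and strictly monotone decreasing across the gap (differentiate the Herglotz representation, as in \eqref{4.19}), so it changes sign at exactly one interior point $\te_*$, and $\Xi_{1,1}$ jumps from $+\pi/2$ to $-\pi/2$ there. The constraint $\oint\Xi_{1,1}\,d\mu_0=0$ then pins down $\te_*=(\te_0+\te_1)/2+\pi$, and it is precisely this interior transition point --- not an additive phase constant $c(k)$ (which vanishes because $M_{1,1}(0,k)=1>0$) nor an integer $n(k)$ --- that produces the phase $g=-e^{i(\te_0+\te_1)/2}$, including the minus sign. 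Without the monotonicity argument and the location of $\te_*$, the moments you propose to compute cannot be evaluated and the correct value of $g$ cannot emerge.

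A secondary inaccuracy: the first trace formula gives only the product $\ol{\al_{k+1}}\al_k=-e^{-i(\te_0+\te_1)/2}\cos^2((\te_1-\te_0)/4)$, hence $|\al_{k+1}|\,|\al_k|=\cos^2((\te_1-\te_0)/4)$; this alone shows that $|\al_k|$ is $2$-periodic in $k$, not constant, so it does not ``already yield'' $|\al_k|=\cos((\te_1-\te_0)/4)$. One needs the second trace formula to obtain $|\al_k|^2+|\al_{k+1}|^2=2\cos^2((\te_1-\te_0)/4)$, and only the two relations combined give $\al_{k+1}=-e^{i(\te_0+\te_1)/2}\al_k$ together with the stated modulus.
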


The main goal of this paper is to extend Theorem \ref{t1.3} to CMV
operators with matrix-valued Verblunsky coefficients, introduced in
\cite{CGZ07}. Our study will be undertaken in the spirit of
\cite{BGMS05}, \cite{CG02}, \cite{CGHL00}, \cite{CGR05}, where
Borg-type theorems were proven for matrix-valued Schr\"odinger and
Dirac-type operators on $\bbR$ and similarly for matrix-valued
Jacobi operators on $\bbZ$.

In Section \ref{s2} we prove an infinite sequence of trace formulas
connected with CMV operators using Weyl--Titchmarsh $m$-functions
(and their exponential Herglotz representations). The notion of
reflectionless CMV operators is introduced in Section \ref{s3} and a
variety of necessary conditions (many of them also sufficient) for a
CMV operator to be reflectionless are established. In Section
\ref{s4} we prove our main new result, a Borg-type theorem for
reflectionless CMV operators with matrix-valued Verblunsky
coefficients whose spectrum consists of a connected arc on the unit
circle $\dD$.

\section{Trace Formulas} \label{s2}

In this section we introduce CMV operators with matrix-valued
Verblunsky coefficients, review some basic facts on the
Weyl--Titchmarsh theory associated with these operators, and derive
an infinite sequence of trace formulas. We freely use the notation
established in Appendix \ref{sA}.

Let $\ltm{\Z}=\ell^2(\Z)\otimes\C^m$ be the Hilbert space of
sequences of $m$-dimensional complex-valued vectors with scalar
product given by
\begin{align} \lb{2.1a}
(\phi,\psi)_{\ltm{\Z}} = \sum_{k=-\infty}^\infty\sum_{j=1}^m
\ol{(\phi(k))_j}(\psi(k))_j, \quad \phi,\psi\in\ltm{\Z},
\end{align}
where we used the following notation for elements of $\ltm{\Z}$
\begin{align} \lb{2.1b}
\phi=\{\phi(k)\}_{k\in\Z}=
\begin{pmatrix}
\vdots\\\phi(-1)\\\phi(0)\\\phi(1)\\\vdots
\end{pmatrix}\in\ltm{\Z}, \quad
\phi(k)=
\begin{pmatrix}
(\phi(k))_1\\(\phi(k))_2\\\vdots\\(\phi(k))_m
\end{pmatrix}\in\C^m, \;  k\in\Z.
\end{align}
A straightforward modification of the above definitions also yields
the Hilbert space $\ltm{J}$ for any $J\subset\Z$. For simplicity, we
will abbreviate the $m\times m$ identity matrix by $I_m$ and the
identity operator on $\ltm{J}$, $J\subseteq\Z$, by $I$ without
separately indicating its dependence on $J$ or $m$.

We start by introducing our basic assumption:

\begin{hypothesis} \lb{h2.1}
Let $m\in\N$ and assume $\al=\{\al_k\}_{k\in\Z}$ is a sequence of $m
\times m$ matrices with complex entries and such that
\begin{equation} \lb{2.1}
\norm{\al_k}_{\Cm} < 1, \quad k\in\Z.
\end{equation}
\end{hypothesis}

Given a sequence $\al$ satisfying \eqref{2.1}, we define two
sequences of positive self-adjoint $m\times m$ matrices
$\{\rho_k\}_{k\in\bbZ}$ and $\{\wti\rho_k\}_{k\in\bbZ}$ by
\begin{align}
\rho_k = (I_m-\al_k^*\al_k)^{1/2}, \;\; \wti\rho_k &=
(I_m-\al_k\al_k^*)^{1/2}, \quad k\in\bbZ. \lb{2.2}
\end{align}
Then \eqref{2.2} implies that $\rho_k$ and $\wti\rho_k$ are
invertible matrices for all $k\in\Z$, and using elementary power
series expansions one verifies the following identities
\begin{align}
&\wti\rho_k^{\pm1}\al_k = \al_k\rho_k^{\pm1} \,\text{ and }\,
\al_k^*\wti\rho_k^{\pm1} = \rho_k^{\pm1}\al_k^*, \quad k\in\Z.
\lb{2.2a}
\end{align}

According to Simon \cite{Si04}, we call $\al_k$ the Verblunsky
coefficients in honor of Verblunsky's pioneering work in the theory
of orthogonal polynomials on the unit circle \cite{Ve35},
\cite{Ve36}.

Next, we introduce a sequence of $2\times 2$ block unitary matrices
$\Te_k$ with $m\times m$ matrix coefficients by
\begin{equation} \lb{2.2b}
\Te_k = \begin{pmatrix} -\al_k & \wti\rho_k \\ \rho_k & \al_k^*
\end{pmatrix},
\quad k \in \Z,
\end{equation}
and two unitary operators $\V$ and $\W$ on $\ltm{\Z}$ by their
matrix representations in the standard basis of $\ltm{\Z}$ by
\begin{align} \lb{2.2c}
\V &= \begin{pmatrix} \ddots & & &
\raisebox{-3mm}[0mm][0mm]{\hspace*{-5mm}\Huge $0$}  \\ & \Te_{2k-2}
& & \\ & & \Te_{2k} & & \\ &
\raisebox{0mm}[0mm][0mm]{\hspace*{-10mm}\Huge $0$} & & \ddots
\end{pmatrix}, \quad
\W = \begin{pmatrix} \ddots & & &
\raisebox{-3mm}[0mm][0mm]{\hspace*{-5mm}\Huge $0$}
\\ & \Te_{2k-1} &  &  \\ &  & \Te_{2k+1} &  & \\ &
\raisebox{0mm}[0mm][0mm]{\hspace*{-10mm}\Huge $0$} & & \ddots
\end{pmatrix},
\end{align}
where
\begin{align}
\begin{pmatrix}
\V_{2k-1,2k-1} & \V_{2k-1,2k} \\ \V_{2k,2k-1}   & \V_{2k,2k}
\end{pmatrix} =  \Te_{2k},
\quad
\begin{pmatrix}
\W_{2k,2k} & \W_{2k,2k+1} \\ \W_{2k+1,2k}  & \W_{2k+1,2k+1}
\end{pmatrix} =  \Te_{2k+1},
\quad k\in\Z. \lb{2.2d}
\end{align}

Finally, we define the main object of our investigation, namely the
unitary operator $\U$ on $\ltm{\Z}$ as the product of the unitary
operators $\V$ and $\W$ by
\begin{equation} \lb{2.2e}
\U = \V\W,
\end{equation}
or in matrix form in the standard basis of $\ltm{\Z}$, by
\begin{align}
\U = \begin{pmatrix} \ddots &&\hspace*{-8mm}\ddots
&\hspace*{-10mm}\ddots &\hspace*{-12mm}\ddots &\hspace*{-14mm}\ddots
&&& \raisebox{-3mm}[0mm][0mm]{\hspace*{-6mm}{\Huge $0$}}
\\
&0& -\al_{0}\rho_{-1} & -\al_{0}\al_{-1}^* & -\wti\rho_{0}\al_{1} &
\wti\rho_{0}\wti\rho_{1}
\\
&& \rho_{0}\rho_{-1} &\rho_{0}\al_{-1}^* & -\al_{0}^*\al_{1} &
\al_{0}^*\wti\rho_{1} & 0
\\
&&&0& -\al_{2}\rho_{1} & -\al_{2}\al_{1}^* & -\wti\rho_{2}\al_{3} &
\wti\rho_{2}\wti\rho_{3}
\\
&&\raisebox{-4mm}[0mm][0mm]{\hspace*{-6mm}{\Huge $0$}} &&
\rho_{2}\rho_{1} & \rho_{2}\al_{1}^* & -\al_{2}^*\al_{3} &
\al_{2}^*\wti\rho_{3}&0
\\
&&&&&\hspace*{-14mm}\ddots &\hspace*{-14mm}\ddots
&\hspace*{-14mm}\ddots &\hspace*{-8mm}\ddots &\ddots
\end{pmatrix}. \lb{2.3}
\end{align}
Here terms of the form $-\al_{2k}\al_{2k-1}^*$ and
$-\al_{2k}^*\al_{2k+1}$, $k\in\Z$, represent the diagonal entries
$\U_{2k-1,2k-1}$ and $\U_{2k,2k}$ of the infinite matrix $\U$ in
\eqref{2.3}, respectively. We continue to call the operator $\U$ on
$\ltm{\Z}$ the CMV operator since \eqref{2.2b}--\eqref{2.3} in the
context of the scalar-valued semi-infinite (i.e., half-lattice) case
were obtained by Cantero, Moral, and Vel\'azquez in \cite{CMV03} in
2003, but we refer to the discussion in the introduction about the
involved history of these operators.

Next, following \cite{CGZ07} we recall the definition of the
$2m\times2m$ matrix-valued Weyl--Titchmarsh function
$\cM(\cdot,k_0)$, $k_0\in\Z$, associated with $\U$,
\begin{align}
\cM(z,k_0) &= \begin{pmatrix} M_{0,0}(z,k_0) & M_{0,1}(z,k_0) \\
M_{1,0}(z,k_0) & M_{1,1}(z,k_0) \end{pmatrix} \no
\\ &=
\begin{pmatrix}
\De_{k_0-1}(\U+zI)(\U-zI)^{-1}\De_{k_0-1}
&\De_{k_0-1}(\U+zI)(\U-zI)^{-1}\De_{k_0}
\\
\De_{k_0}(\U+zI)(\U-zI)^{-1}\De_{k_0-1} &
\De_{k_0}(\U+zI)(\U-zI)^{-1}\De_{k_0}
\end{pmatrix} \no
\\ &=
\oint_\dD d\Omega(\ze,k_0)\, \frac{\ze+z}{\ze-z}, \quad
z\in\bbC\backslash\dD. \lb{2.3a}
\end{align}
Here $\De_{k_0}$ denote the orthogonal projections onto the
$m$-dimensional subspaces $\ltm{\{k_0\}}$, $k_0\in\Z$. The
nonnegative $2m\times2m$ matrix-valued measure $d\Om(\cdot,k_0)$,
$k_0\in\Z$, is given by
\begin{align}
d\Omega(\ze,k_0) &= d
\begin{pmatrix}
\Omega_{0,0}(\ze,k_0) & \Omega_{0,1}(\ze,k_0)
\\
\Omega_{1,0}(\ze,k_0) & \Omega_{1,1}(\ze,k_0)
\end{pmatrix} \no
\\ &= d
\begin{pmatrix}
\De_{k_0-1}E_{\U}(\ze)\De_{k_0-1} & \De_{k_0-1}E_{\U}(\ze)\De_{k_0}
\\
\De_{k_0}E_{\U}(\ze)\De_{k_0-1} & \De_{k_0}E_{\U}(\ze)\De_{k_0}
\end{pmatrix}, \quad \ze \in\dD, \lb{2.3b}
\end{align}
where $E_{\U}(\cdot)$ denotes the family of operator-valued spectral
projections of the unitary CMV operator $\U$ on $\ltm{\bbZ}$,
\begin{equation}
\U=\oint_\dD dE_{\U}(\ze)\,\ze.
\end{equation}
It follows from Theorem \ref{tA.2} that $\cM(\cdot,k_0)|_{\D}$ is a
Caratheodory matrix and the measure $d\Om(\cdot,k_0)$ can be also
obtained from $\cM(\cdot,k_0)$ via \eqref{A.4}.

The Weyl--Titchmarsh function $\cM(\cdot,k_0)$ is a fundamental
object in the spectral theory of CMV operators. It encodes all the
spectral information of the corresponding operator $\U$ which can be
illustrated by the following result.

\begin{theorem}[\cite{CGZ07}] \lb{t2.2}
The full-lattice CMV operator $\U$ is unitarily equivalent to the
operator of multiplication by $\zeta$ on $\Ltm{}$ for any
$k_0\in\Z$. In particular,
\begin{align}
& \si(\U) = \supp \, (d\Om(\cdot,k_0)), \quad k_0\in\Z,
\end{align}
where $\si(\U)$ denotes the spectrum of $\U$.
\end{theorem}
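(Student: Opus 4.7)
\emph{Proof plan.} The plan is to reduce the theorem to the abstract spectral theorem for a unitary operator with a cyclic family, and then to identify the resulting $2m\times 2m$ matrix-valued spectral measure with $d\Om(\cdot,k_0)$ from \eqref{2.3b}.

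First I would establish the \emph{cyclicity} claim: the closed subspace $\cC\subset\ltm{\Z}$ generated by $\ran(\De_{k_0-1})+\ran(\De_{k_0})$ under the action of $\U$ and $\U^*$ coincides with $\ltm{\Z}$. The argument rests on the five-diagonal form \eqref{2.3}. A direct inspection shows that each column of $\U$ has exactly four nonzero blocks and that the extreme nonzero block is invertible: $\rho_{k+2}\rho_{k+1}$ at position $(k+2,k)$ for even $k$, and $\wti\rho_{k-1}\wti\rho_k$ at position $(k-2,k)$ for odd $k$, with an analogous statement holding for $\U^*$. Depending on the parity of $k_0$, applying either $\U$ or $\U^*$ to the initial subspace $\ran(\De_{k_0-1})+\ran(\De_{k_0})$ and projecting onto $\ran(\De_{k_0-2})\oplus\ran(\De_{k_0+1})$ yields a linear map on $\C^{2m}$ that factors as an invertible $2\times 2$ block-diagonal matrix (with entries among $\{\rho_k,\wti\rho_k\}$) multiplied by the unitary block matrix $\Te_{k_0}$ (or $\Te_{k_0}^*$) from \eqref{2.2b}. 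This map is invertible by Hypothesis \ref{h2.1} together with the unitarity of $\Te_{k_0}$, so $\ran(\De_{k_0-2})+\ran(\De_{k_0+1})\subset\cC$. Alternating applications of $\U$ and $\U^*$ then propagate this inclusion to every $\ran(\De_k)$, $k\in\Z$, via induction on $|k-k_0+\tfrac{1}{2}|$.

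With cyclicity secured, the general spectral theorem for unitary operators produces a unitary $W:\ltm{\Z}\to L^2(\dD;d\Om(\cdot,k_0))$ that intertwines $\U$ with multiplication by $\ze$. Concretely, $W$ is specified on the dense subspace spanned by $\{\U^n(\De_{k_0-1}\phi+\De_{k_0}\psi):n\in\Z,\;\phi,\psi\in\C^m\}$ by
\begin{equation}
W\bigl(\U^n(\De_{k_0-1}\phi+\De_{k_0}\psi)\bigr)(\ze)
= \ze^n\begin{pmatrix}\phi\\ \psi\end{pmatrix}, \quad \ze\in\dD,
\end{equation}
and the isometry identity
\begin{equation}
\bigl(\U^n(\De_{k_0-1}\phi+\De_{k_0}\psi),\U^{n'}(\De_{k_0-1}\phi'+\De_{k_0}\psi')\bigr)_{\ltm{\Z}}
= \oint_\dD \ze^{n'-n}\begin{pmatrix}\phi\\ \psi\end{pmatrix}^{*} d\Om(\ze,k_0)\begin{pmatrix}\phi'\\ \psi'\end{pmatrix}
\end{equation}
reduces to a routine computation via the functional calculus $\U^{n'-n}=\oint_\dD\ze^{n'-n}dE_\U(\ze)$ and the definition \eqref{2.3b} of $d\Om(\cdot,k_0)$. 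Density of Laurent polynomials in $L^2(\dD;d\Om(\cdot,k_0))$ combined with the cyclicity step ensures that $W$ extends to a unitary onto the full target space, and $W\U W^{-1}=M_\ze$ holds by construction. The spectrum identity $\si(\U)=\supp(d\Om(\cdot,k_0))$ then follows from the standard fact that the spectrum of multiplication by $\ze$ on $L^2(\dD;d\Om(\cdot,k_0))$ coincides with the closed support of the underlying measure.

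I expect the main technical point to be the cyclicity step, specifically the observation that the induced $2m\times 2m$ transition matrix from $\ran(\De_{k_0-1})\oplus\ran(\De_{k_0})$ to $\ran(\De_{k_0-2})\oplus\ran(\De_{k_0+1})$ factors through a unitary $\Te_{k_0}$ and thus inherits invertibility directly from Hypothesis \ref{h2.1}; once this is in hand, the remainder of the argument is a standard application of the spectral theorem and functional calculus (the scalar case is classical, and the matrix-valued extension is the content of \cite{CGZ07}).
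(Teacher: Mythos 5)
The paper offers no proof of Theorem \ref{t2.2}: the result is imported verbatim from \cite{CGZ07}, with a pointer to Section~3 of that reference, where (as the setup recalled in Section~\ref{s3} of the present paper indicates) the unitary equivalence is constructed explicitly through the matrix-valued Laurent polynomials $P_+(\cdot,\cdot,k_0)$, $R_+(\cdot,\cdot,k_0)$ of \eqref{3.1}--\eqref{3.2} and their orthonormality/completeness in $\Ltm{}$. Your route is genuinely different and more operator-theoretic: abstract cyclicity of the $2m$-dimensional subspace $\ran(\De_{k_0-1})+\ran(\De_{k_0})$ followed by the spectral theorem. I believe it is correct, and the key step checks out: for $k_0$ even, $\U^*(\De_{k_0-1}\phi+\De_{k_0}\psi)=\W^*\V^*(\De_{k_0-1}\phi+\De_{k_0}\psi)$ is supported on sites $k_0-2,\dots,k_0+1$, its components at sites $k_0-1,k_0$ already lie in $\cC$, and its projection onto $\ran(\De_{k_0-2})\oplus\ran(\De_{k_0+1})$ is
$\diag(\rho_{k_0-1},\wti\rho_{k_0+1})\,\Te_{k_0}^*$
acting on $(\phi,\psi)$, which is invertible by Hypothesis \ref{h2.1}; the subsequent steps add one site at a time via the invertible corner blocks $\rho_{k+2}\rho_{k+1}$ and $\wti\rho_{k-1}\wti\rho_k$ (and their adjoints for $\U^*$), since at each such step all but the extreme component of the image already lie in $\cC$. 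The remainder (well-definedness of $W$ from the isometry identity, density of $\C^{2m}$-valued Laurent polynomials in $\Ltm{}$, and $\si(M_\ze)=\supp(d\Om(\cdot,k_0))$) is standard. What your approach buys is independence from the half-lattice polynomial machinery; what the route of \cite{CGZ07} buys is an explicit formula for the intertwining unitary in terms of $P_+,R_+$, which is then reused throughout their Weyl--Titchmarsh theory. One bookkeeping point to tighten: for $k_0$ even it is $\U^*$, not $\U$, that stays within the window $k_0-2,\dots,k_0+1$ and produces the invertible $\Te_{k_0}^*$-factorization (applying $\U$ instead spreads to $k_0-3,\dots,k_0+2$, and its projection onto $\ran(\De_{k_0-2})\oplus\ran(\De_{k_0+1})$ involves $\al_{k_0\pm2}$ and need not be invertible); your ``depending on the parity of $k_0$'' hedge is the right instinct, but the induction should be written so that $\U$ and $\U^*$ are each used to reach the new sites they control invertibly.
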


We refer to Section 3 of the recent monograph \cite{CGZ07} for a
detailed discussion of this and other relations between
Weyl--Titchmatsh function $\cM(\cdot,k_0)$ and the associated CMV
operator $\U$.

Next, we note that
\begin{align}
M_{0,0}(\cdot,k_0+1) = M_{1,1}(\cdot,k_0), \quad k_0\in\Z \lb{2.3c}
\end{align}
and
\begin{align}
M_{1,1}(z,k_0) &= \De_{k_0}(\U+zI)(\U-zI)^{-1}\De_{k_0} \no
\\ & = \oint_\dD d\Omega_{1,1}(\ze,k_0) \,
\frac{\ze+z}{\ze-z}, \quad z\in\bbC\backslash\dD,\; k_0\in\Z,
\lb{2.3d}
\end{align}
where
\begin{equation}
d\Omega_{1,1}(\ze,k_0)=d\De_{k_0}E_{\U}(\ze)\De_{k_0}, \quad
\ze\in\dD. \lb{2.3e}
\end{equation}
Thus, $M_{0,0}|_{\D}$ and $M_{1,1}|_{\D}$ are $m\times m$
Caratheodory matrices. Moreover, by \eqref{2.3d} one infers that
\begin{equation}
M_{1,1}(0,k_0)=I_m, \quad k_0\in\Z. \lb{2.3f}
\end{equation}
This implies that for any nonzero vector $x_0\in\C^m$ the
scalar-valued Caratheodory function $m_{x_0}(z,k_0)=\big(x_0,
M_{1,1}(z,k_0)x_0\big)_{\C^m}$, $z\in\D$, is not identically zero,
and hence, $\Re(m_{x_0}(z,k_0))>0$ for all $z\in\D$. Thus,
\begin{equation}
\Re(M_{1,1}(z,k_0))>0, \quad z\in\D, \; k_0\in\Z. \lb{2.3g}
\end{equation}
It is also often beneficial to introduce the Schur matrix
$\Phi_{1,1}(\cdot,k_0)$ associated to $M_{1,1}(\cdot,k_0)$ via
\begin{align}
\begin{split}
\Phi_{1,1}(z,k_0) &= [M_{1,1}(z,k_0)-I_m][M_{1,1}(z,k_0)+I_m]^{-1},
\\
M_{1,1}(z,k_0) &=
[I_m+\Phi_{1,1}(z,k_0)][I_m-\Phi_{1,1}(z,k_0)]^{-1}, \quad z\in\D.
\end{split} \lb{2.3h}
\end{align}

In analogy to the exponential representation of invertible
matrix-valued Herglotz functions (i.e., matrix-valued functions
analytic in the open complex upper half-plane $\bbC_+$ with
nonnegative imaginary part on $\bbC_+$ and invertible on $\bbC_+$,
cf.\ \cite{Ca76}, \cite{GT00}) one obtains the following result.

\begin{theorem} \lb{t2.3}
Let $F$ be an $m\times m$ Caratheodory matrix with $F(z)$ invertible
for all $z\in\D$. Then $-i\ln(iF)$ is an $m\times m$ Caratheodory
matrix and $F$ has the exponential Herglotz representation,
\begin{align}
\begin{split}
& -i\ln(iF(z))=iD + \oint_{\dD} d\mu_0(\zeta) \,\Upsilon (\zeta) \,
\f{\zeta+z}{\zeta-z},
\quad z\in\D,
\\
& \; D=-\Re(\ln(F(0))), \quad 0 \leq \Upsilon (\zeta)\leq \pi I_m
\,\text{ for $\mu_0$-a.e.\ $\zeta\in\dD$}, \lb{2.4}
\end{split}
\end{align}
where $d\mu_0$ is the normalized Lebesgue measure on $\dD$ $($cf.
\eqref{A.5b}$)$. The $m\times m$ matrix-valued function $\Upsilon$
can be reconstructed from $F$ by
\begin{align}
\Upsilon (\zeta) &= \lim_{r\uparrow 1}\Re[-i\ln(iF(r\zeta))] \no
\\
& =(\pi/2)I_m+\lim_{r\uparrow 1}\Im[\ln(F(r\zeta))] \,\text{ for
$\mu_0$-a.e.\ $\zeta\in\dD$.}
\end{align}
\end{theorem}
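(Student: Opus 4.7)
The plan is to reduce the claim to the matrix-valued Carath\'eodory representation for $\D$ already recorded in Appendix~\ref{sA} (cf.\ Theorem~\ref{tA.2}) by first proving that $G(z) := -i \ln(iF(z))$ is itself a Carath\'eodory matrix on $\D$ satisfying the additional uniform bound $\Re G(z) \le \pi I_m$. Once this is done, the rest of the argument consists of reading off the Herglotz representation of $G$, identifying its constant term with $D$ via evaluation at $z=0$, and invoking the standard Poisson-boundary-value part of Theorem~\ref{tA.2} to recover $\Upsilon$.

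To make sense of $\ln(iF(z))$ one uses holomorphic functional calculus with the branch of $\ln$ analytic on $\C\setminus\{-it:t\ge 0\}$ (argument in $(-\pi/2,3\pi/2)$). The hypothesis $\Re F(z)\ge 0$ pushes the numerical range, hence $\si(F(z))$, into the closed right half-plane; invertibility removes $0$, and strictly negative real eigenvalues are ruled out as they would violate $\Re F(z)\ge 0$. Hence $\si(iF(z))$ lies in the closed upper half-plane with the origin removed and is contained in the chosen branch's domain, and $z\mapsto\ln(iF(z))$ is analytic on $\D$.

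To establish $0\le \Re G(z)\le\pi I_m$ one first replaces $F$ by $F_{\varepsilon}:=F+\varepsilon I_m$ (again Carath\'eodory and invertible) and eventually lets $\varepsilon\downarrow 0$; for $\varepsilon>0$ the spectrum of $iF_{\varepsilon}(z)$ lies strictly in the open upper half-plane, so the principal-branch integral representation
\[
\ln A=\int_{0}^{\infty}\bigl[(1+t)^{-1}I_m-(tI_m+A)^{-1}\bigr]\,dt
\]
applies to $A=iF_{\varepsilon}(z)$. A resolvent identity transforms this into
\[
\Im\ln(iF_{\varepsilon}(z))=\int_{0}^{\infty}(tI_m+iF_{\varepsilon}(z))^{-*}\,[\Re F_{\varepsilon}(z)]\,(tI_m+iF_{\varepsilon}(z))^{-1}\,dt,
\]
which is nonnegative as a matrix since $\Re F_{\varepsilon}(z)\ge 0$, yielding $\Re G\ge 0$. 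For the upper bound one exploits that the Carath\'eodory class is closed under inversion, $\Re(F^{-1})=F^{-1}(\Re F)F^{-*}\ge 0$, together with the identity $\ln(iF)+\ln(iF^{-1})=i\pi I_m$ (obtained by functional calculus on the commuting pair $iF,iF^{-1}$, both of whose spectra lie in the closed upper half-plane, so that the angles telescope to $\pi$); applying the already-proved lower bound to $F^{-1}$ then gives $\Im\ln(iF)\le\pi I_m$.

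With $G$ now known to be Carath\'eodory, Theorem~\ref{tA.2} supplies a representation
\[
G(z)=iC+\oint_{\dD}\f{\zeta+z}{\zeta-z}\,d\Sigma(\zeta),\quad z\in\D,
\]
with $C$ self-adjoint and $\Sigma$ a nonnegative $m\times m$ matrix measure on $\dD$. The uniform bound $\Re G\le\pi I_m$, combined with \eqref{A.4}, forces $d\Sigma=\Upsilon\,d\mu_0$ with $0\le\Upsilon(\zeta)\le\pi I_m$ for $\mu_0$-a.e.\ $\zeta$. Evaluating at $z=0$ gives $G(0)=iC+\oint\Upsilon\,d\mu_0$; on the other hand, the scalar identity $\ln(iw)=\ln w+i\pi/2$ valid for $\Re w>0$ lifts by functional calculus (and continuity as $\varepsilon\downarrow 0$) to $G(0)=(\pi/2)I_m-i\ln F(0)$, so comparing imaginary parts yields $C=-\Re\ln F(0)=D$. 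The boundary reconstruction of $\Upsilon$ as a radial limit follows from the Fatou/Poisson part of Theorem~\ref{tA.2} applied to $\Re G=\Im\ln(iF)$, together with the relation $\Im\ln(iF)=(\pi/2)I_m+\Im\ln F$ where the latter makes sense.

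The main obstacle is the matrix inequality $0\le\Im\ln(iF(z))\le\pi I_m$: the scalar analogue is essentially a one-line observation about the range of $\arg$, but the matrix version relies both on the sandwiched resolvent formula for $\Im\ln$ and on the inversion trick through $\ln(iF)+\ln(iF^{-1})=i\pi I_m$. The subsidiary technicality of selecting and tracking a branch of the matrix logarithm across the boundary of $\si(iF(z))$ is absorbed by the $\varepsilon$-regularisation just described.
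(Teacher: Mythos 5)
Your proof is correct. The paper gives no proof of Theorem \ref{t2.3} at all --- it states the result ``in analogy'' to the exponential representation of invertible matrix-valued Herglotz functions and defers to \cite{Ca76}, \cite{GT00} --- and your argument (the integral formula for $\ln$ yielding $\Im\ln(iF)\ge 0$ as a sandwiched nonnegative integrand, the inversion identity $\ln(iF)+\ln(iF^{-1})=i\pi I_m$ for the upper bound, and then Theorem \ref{tA.2} plus \eqref{A.4} to read off $d\Sigma=\Upsilon\,d\mu_0$ with $0\le\Upsilon\le\pi I_m$) is precisely the standard proof from those references transported to the disk via $F\mapsto iF$, so it realizes the route the paper intends.
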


By Theorem \ref{t2.3} and \eqref{2.3f}, \eqref{2.3g}, one then
obtains the exponential Herglotz representation for
$M_{1,1}(\cdot,k_0)$, $k_0\in\bbZ$,
\begin{align}
\begin{split}
& -i\ln[iM_{1,1}(z,k_0)]=\oint_{\dD} d\mu_0(\zeta)\,
\Upsilon_{1,1}(\zeta,k_0) \f{\zeta+z}{\zeta-z}, \quad z\in\D,
\\
& \;\, 0\leq \Upsilon_{1,1}(\zeta,k_0) \leq \pi I_m \,\text{ for
$\mu_0$-a.e.\ $\zeta\in\dD$}.
\end{split} \lb{2.5}
\end{align}
For our present purpose it is more convenient to rewrite \eqref{2.5}
in the form
\begin{align}
\begin{split}
& \ln[M_{1,1}(z,k_0)]=i \oint_{\dD} d\mu_0(\zeta)\,
\Xi_{1,1}(\zeta,k_0) \f{\zeta+z}{\zeta-z}, \quad z\in\D,
\\
&\!  -(\pi/2)I_m \leq \Xi_{1,1}(\zeta,k_0) \leq (\pi/2)I_m \,\text{
for $\mu_0$-a.e.\ $\zeta\in\dD$},
\end{split}\lb{2.7}
\end{align}
where
\begin{align}
\Xi_{1,1}(\zeta,k_0) &= \lim_{r\uparrow 1}
\Im[\ln(M_{1,1}(r\zeta,k_0))] \no
\\
&= \Upsilon_{1,1}(\zeta,k_0)-(\pi/2)I_m  \,\text{ for $\mu_0$-a.e.\
$\zeta\in\dD$}. \lb{2.8}
\end{align}
We note that $M_{1,1}(0,k_0)=I_m$ also implies
\begin{equation}
\oint_{\dD} d\mu_0(\zeta)\, \Xi_{1,1}(\zeta,k_0)=0, \quad
k_0\in\bbZ. \lb{2.9}
\end{equation}

To derive trace formulas for $\U$ we now expand $M_{1,1}(z,k_0)$
near $z=0$ into a norm convergent series with matrix-valued
coefficients. It follows from \eqref{2.3d} that
\begin{align}
M_{1,1}(z,k_0) &= \De_{k_0}(\U+zI)(\U-zI)^{-1}\De_{k_0} =
I_m+2z\De_{k_0}\U^*(I-z\U^*)^{-1}\De_{k_0} \no
\\
&= I_m+\sum_{j=1}^\infty M_j(\U,k_0)z^j, \quad z\in\D, \lb{2.10}
\end{align}
where
\begin{equation}
M_j(\U,k_0) = 2\De_{k_0}(\U^*)^j\De_{k_0}, \quad j\in\N,\;
k_0\in\bbZ.
\end{equation}
Explicitly, using \eqref{2.3}, one computes for $k_0\in\Z$,
\begin{align}
&M_1(\U,k_0) = -2
\begin{cases}
\al_{k_0}\al_{k_0+1}^*, & k_0 \text{ odd,}
\\
\al_{k_0+1}^*\al_{k_0}, & k_0 \text{ even,}
\end{cases} \lb{2.11}
\\
&M_2(\U,k_0) \no
\\
&\quad = 2
\begin{cases}
(\al_{k_0}\al_{k_0+1}^*)^2
-\al_{k_0}\rho_{k_0+1}\al_{k_0+2}^*\wti\rho_{k_0+1}
-\wti\rho_{k_0}\al_{k_0-1}\rho_{k_0}\al_{k_0+1}^*, & k_0 \text{
odd,}
\\
(\al_{k_0+1}^*\al_{k_0})^2
-\rho_{k_0+1}\al_{k_0+2}^*\wti\rho_{k_0+1}\al_{k_0}
-\al_{k_0+1}^*\wti\rho_{k_0}\al_{k_0-1}\rho_{k_0}, & k_0 \text{
even.}
\end{cases} \lb{2.12}
\end{align}

Next, we note that the Taylor expansion \eqref{2.10} implies the
norm convergent expansion
\begin{equation}
\ln(M_{1,1}(z,k_0))=\sum_{j=1}^\infty L_j(\U,k_0)z^j, \, \text{
$|z|$ sufficiently small}, \; k_0\in\bbZ, \lb{2.16}
\end{equation}
where the matrix-valued coefficients $L_j(\U,k_0)$ can be expressed
in terms of the coefficients $M_j(\U,k_0)$, $j\in\N$, $k_0\in\Z$,
\begin{align}
L_1(\U,k_0) & = M_1(\U,k_0), \lb{2.17}
\\
L_2(\U,k_0) & = M_2(\U,k_0) - \f12 M_1(\U,k_0)^2, \lb{2.17a}
\\
L_3(\U,k_0) & = M_3(\U,k_0) -
\f12\big[M_1(\U,k_0)M_2(\U,k_0)+M_2(\U,k_0)M_1(\U,k_0)\big]
\\
&\quad + \f13 M_1(\U,k_0)^3, \;\text{ etc.}
\end{align}

\begin{theorem} \lb{t2.4}
Assume Hypothesis \ref{h2.1}. Then the following trace formulas
associated with the CMV operator $\U$ hold,
\begin{equation}
L_j(U,k_0)=2i\oint_{\dD} d\mu_0(\zeta)\, \Xi_{1,1}(\zeta,k_0)\,{\ol
\zeta}^j, \quad j\in\bbN, \; k_0\in\Z. \lb{2.18}
\end{equation}
In particular,
\begin{align}
L_1(\U,k_0) &= -2
\begin{cases}
\al_{k_0}\al_{k_0+1}^*, & k_0 \text{ odd,}
\\
\al_{k_0+1}^*\al_{k_0}, & k_0 \text{ even,}
\end{cases} \no
\\
&= 2i\oint_{\dD} d\mu_0(\zeta)\, \Xi_{1,1}(\zeta,k_0)\,{\ol \zeta},
\lb{2.19}
\\
L_2(\U,k_0) &= -2
\begin{cases}
\al_{k_0}\rho_{k_0+1}\al_{k_0+2}^*\wti\rho_{k_0+1} +
\wti\rho_{k_0}\al_{k_0-1}\rho_{k_0}\al_{k_0+1}^*, & k_0 \text{ odd,}
\\
\rho_{k_0+1}\al_{k_0+2}^*\wti\rho_{k_0+1}\al_{k_0} +
\al_{k_0+1}^*\wti\rho_{k_0}\al_{k_0-1}\rho_{k_0}, & k_0 \text{
even.}
\end{cases} \no
\\
&= 2i\oint_{\dD} d\mu_0(\zeta)\, \Xi_{1,1}(\zeta,k_0)\,{\ol
\zeta}^2. \lb{2.19a}
\end{align}
\end{theorem}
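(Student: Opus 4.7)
The plan is to derive the trace formulas by matching two representations of $\ln[M_{1,1}(z,k_0)]$ as power series in $z$ near the origin. On one hand we have the direct Taylor expansion \eqref{2.16} with matrix coefficients $L_j(\U,k_0)$. On the other, we have the exponential Herglotz representation \eqref{2.7}, whose integrand we now expand in powers of $z$.

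First, I would expand the Poisson-type kernel. Since $|\ze|=1$, one has $1/\ze=\ol\ze$, so for all $z\in\bbD$,
\begin{align*}
\f{\ze+z}{\ze-z} = \f{1+\ol\ze z}{1-\ol\ze z}
= \bigl(1+\ol\ze z\bigr)\sum_{n=0}^{\infty}(\ol\ze z)^n
= 1 + 2\sum_{n=1}^{\infty}\ol\ze^{\,n} z^n,
\end{align*}
the series converging uniformly in $\ze\in\dD$ for each fixed $z$ with $|z|<1$. Substituting this into \eqref{2.7} and interchanging sum and integral (justified by uniform convergence in $\ze$ and boundedness of $\Xi_{1,1}(\cdot,k_0)$ by $(\pi/2)I_m$), I obtain
\begin{align*}
\ln[M_{1,1}(z,k_0)]
= i\oint_{\dD} d\mu_0(\ze)\,\Xi_{1,1}(\ze,k_0)
+ 2i\sum_{j=1}^{\infty} z^j \oint_{\dD} d\mu_0(\ze)\,\Xi_{1,1}(\ze,k_0)\,\ol\ze^{\,j}
\end{align*}
for $z$ in a sufficiently small neighborhood of $0$. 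The constant term vanishes by the normalization \eqref{2.9}. Comparing coefficients of $z^j$ with \eqref{2.16} then yields the trace formula \eqref{2.18} for every $j\in\bbN$, $k_0\in\bbZ$.

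To obtain the explicit formulas for $L_1(\U,k_0)$ and $L_2(\U,k_0)$, I would simply combine the algebraic relations \eqref{2.17} and \eqref{2.17a} with the explicit expressions \eqref{2.11}, \eqref{2.12} for $M_1(\U,k_0)$ and $M_2(\U,k_0)$ coming from the block-five-diagonal structure \eqref{2.3} of $\U$. For $L_1$ this is immediate since $L_1=M_1$. For $L_2$ one checks (say in the case $k_0$ odd) that
\begin{align*}
\tfrac12 M_1(\U,k_0)^2 = 2\bigl(\al_{k_0}\al_{k_0+1}^*\bigr)^{2},
\end{align*}
so that the squared term cancels the corresponding piece of $M_2(\U,k_0)$, leaving precisely the claimed expression; the even case is entirely analogous.

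There is no substantial obstacle here: the only nonroutine checks are the justification of the termwise integration (handled by uniform convergence and the a.e.\ bound on $\Xi_{1,1}$), and bookkeeping of the noncommutative matrix products when computing $L_2$ from $M_1^2$ and $M_2$. The structural input that does the real work, namely the exponential Herglotz representation combined with $M_{1,1}(0,k_0)=I_m$, has already been recorded in Theorem \ref{t2.3} and equations \eqref{2.7}--\eqref{2.9}.
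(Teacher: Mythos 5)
Your proposal is correct and follows essentially the same route as the paper: expand the kernel $\tfrac{\zeta+z}{\zeta-z}=1+2\sum_{j\geq1}(\ol\zeta z)^j$, insert it into the exponential Herglotz representation \eqref{2.7}, and compare coefficients with \eqref{2.16}, then obtain \eqref{2.19} and \eqref{2.19a} by substituting \eqref{2.11}, \eqref{2.12} into \eqref{2.17}, \eqref{2.17a}. The extra details you supply (justifying termwise integration via the a.e.\ bound on $\Xi_{1,1}$ and invoking \eqref{2.9} for the constant term) are correct and only make explicit what the paper leaves implicit.
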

\begin{proof}
Let $z\in\D$, $k_0\in\bbZ$. Since
\begin{equation}
\f{\zeta+z}{\zeta-z}=1+2\sum_{j=1}^\infty (\ol\zeta z)^j, \quad
\zeta\in\dD, \lb{2.20}
\end{equation}
\eqref{2.7} implies
\begin{equation}
\ln[M_{1,1}(z,k_0)]=2i\sum_{j=1}^\infty \oint_{\dD} d\mu_0(\zeta)\,
\Xi_{1,1}(\zeta,k_0){\ol\zeta}^j z^j, \, \text{ $|z|$ sufficiently
small}. \lb{2.21}
\end{equation}
A comparison of coefficients of $z^j$ in \eqref{2.16} and
\eqref{2.21} then proves \eqref{2.18}. \eqref{2.19} and
\eqref{2.19a} follow upon substitution of \eqref{2.11} and
\eqref{2.12} into \eqref{2.17} and \eqref{2.17a}.
\end{proof}

\section{Reflectionless Verblunsky Coefficients} \label{s3}

In this section we review basic facts about the half-lattice
Weyl--Titchmarsh $m$-functions and introduce a variety of conditions
for the Verblunsky coefficients $\alpha$ (resp., the CMV operator
$\U$) to be reflectionless. We freely use the notation established
in Appendix \ref{sA}.

Following the presentation of Section 2 in \cite{CGZ07}, we first
recall the four fundamental $m\times m$ matrix valued sequences of
Laurent polynomials $\{P_+(z,k,k_0)$, $Q_+(z,k,k_0)$,
$R_+(z,k,k_0)$, $S_+(z,k,k_0)\}_{k\in\Z}$ associated with the CMV
operator $\U$. These sequences are uniquely defined by the following
difference relations
\begin{align}
&(\W P_+(z,\cdot,k_0))(k) = z R_+(z,k,k_0), \quad (\V
R_+(z,\cdot,k_0))(k) = P_+(z,k,k_0), \no
\\
&(\W Q_+(z,\cdot,k_0))(k) = z S_+(z,k,k_0), \quad (\V
S_+(z,\cdot,k_0))(k) = Q_+(z,k,k_0), \quad k\in\Z, \lb{3.1}
\end{align}
and initial conditions at some reference point $k_0\in\Z$,
\begin{align}
\binom{P_+(z,k_0,k_0)}{R_+(z,k_0,k_0)} &=
\begin{cases}
\binom{zI_m}{I_m}, & \text{$k_0$ odd,} \\[1mm]
\binom{I_m}{I_m}, & \text{$k_0$ even,}
\end{cases} \quad
\binom{Q_+(z,k_0,k_0)}{S_+(z,k_0,k_0)} =
\begin{cases}
\binom{zI_m}{-I_m}, & \text{$k_0$ odd,} \\[1mm]
\binom{-I_m}{I_m}, & \text{$k_0$ even.} \lb{3.2}
\end{cases}
\end{align}
It follows that there exist unique $\Cm$-valued half-lattice
Weyl--Titchmarsh $m$-functions $M_\pm(\cdot,k_0)$ such that for all
$z\in\C\backslash(\dD\cup\{0\})$ the following $\Cm$-valued
sequences have square summable matrix entries, that is,
\begin{align}
\begin{split}
&U_\pm(z,\cdot,k_0) = Q_+(z,\cdot,k_0) +
P_+(z,\cdot,k_0)M_\pm(z,k_0) \in \ltmm{[k_0,\pm\infty)\cap\Z},
\\
&V_\pm(z,\cdot,k_0) = S_+(z,\cdot,k_0) +
R_+(z,\cdot,k_0)M_\pm(z,k_0) \in \ltmm{[k_0,\pm\infty)\cap\Z}.
\end{split} \lb{3.3}
\end{align}
Moreover, one verifies that the functions $M_\pm(\cdot,k_0)|_\D$ are
Caratheodory and anti-Caratheodory matrices, respectively, and hence
extend to the point $z=0$ by analyticity. In addition, the functions $M_\pm$
are intimately related to the half-lattice CMV operators. We refer
to Section 2 in \cite{CGZ07} for a comprehensive study of these
relations.

We will call $U_\pm(z,\cdot,k_0)$ and $V_\pm(z,\cdot,k_0)$ the
Weyl--Titchmarsh solutions associated with $\U$. It follows that
$U_\pm(z,\cdot,k_0)$ and $V_\pm(z,\cdot,k_0)$ are the unique (up to
right-multiplication by constant $m\times m$ matrices)
$\,\Cm$-valued sequences that satisfy difference equations of
the form \eqref{3.1} whose matrix entries are square
summable near $\pm\infty$ (cf.\ \eqref{3.3}).

In applications it is often simpler to manipulate with Schur
matrices rather than Caratheodory ones. To exploit this observation
in the remainder of this section, we introduce (anti)-Schur matrices
$\Phi_\pm(\cdot,k_0)$ associated with (anti)-Caratheodory matrices
$M_\pm(\cdot,k_0)$ by
\begin{align}
\begin{split}
\Phi_\pm(z,k_0) &= [M_\pm(z,k_0)-I_m][M_\pm(z,k_0)+I_m]^{-1}, \quad
\\
M_\pm(z,k_0) &= [I_m+\Phi_\pm(z,k_0)][I_m-\Phi_\pm(z,k_0)]^{-1},
\quad z\in\D.
\end{split} \lb{3.4}
\end{align}
Strictly speaking, one should always consider $\Phi_-^{-1}$ rather
than $\Phi_-$ as $M_-$ is an anti-Caratheodory matrix and hence for
$z\in\D$ the expression $[M_-(z,k_0) + I_m]$ is not necessarily
invertible but $[M_-(z,k_0)-I_m]$ always is (cf.\ \cite[p.\
137]{SF70}). Thus, we should have introduced the Schur matrix
\begin{equation}
\Phi_-(z,k_0)^{-1} = [M_-(z,k_0) + I_m] [M_-(z,k_0) - I_m]^{-1},
\quad z\in\D,
\end{equation}
rather than the anti-Schur matrix $\Phi_-$, but for simplicity of
notation, we will typically avoid this complication with $\Phi_-$
and still invoke $\Phi_-$ rather than $\Phi_-^{-1}$ whenever
confusions are unlikely.

Still following \cite{CGZ07}, we also mention the following two
useful identities that relate the functions $\Phi_\pm$, $\Phi_{1,1}$,
and the Weyl--Titchmarsh solutions $U_\pm$, $V_\pm$ to one another
\begin{align}
&\Phi_\pm(z,k) =
\begin{cases}
zV_\pm(z,k,k_0)U_\pm(z,k,k_0)^{-1}, &\text{$k$ odd,}
\\
U_\pm(z,k,k_0)V_\pm(z,k,k_0)^{-1}, & \text{$k$ even,}
\end{cases}
\quad z\in\D,\; k,k_0\in\bbZ,  \lb{3.6}
\intertext{and}%
&\Phi_{1,1}(z,k_0) =
\begin{cases}
\Phi_-(z,k_0)^{-1}\Phi_+(z,k_0), & k_0\text{ odd},
\\
\Phi_+(z,k_0)\Phi_-(z,k_0)^{-1}, & k_0\text{ even},
\end{cases}
\quad z\in\D,\; k_0\in\bbZ. \lb{3.7}
\end{align}
In addition, the functions $\Phi_\pm(\cdot,k)^{\pm1}$ satisfy the
following Riccati-type equations
\begin{align}
\Phi_+(z,k)\wti\rho_k^{-1}\al_k\Phi_+(z,k-1) +
z\Phi_+(z,k)\wti\rho_k^{-1} - \rho_k^{-1}\Phi_+(z,k-1)
=z\rho_k^{-1}\al_k^*,& \lb{3.8}
\\
z\Phi_-(z,k)^{-1}\rho_k^{-1}\al_k^*\Phi_-(z,k-1)^{-1} +
\Phi_-(z,k)^{-1}\rho_k^{-1} - z\wti\rho_k^{-1}\Phi_-(z,k-1)^{-1}&
\no
\\
= \wti\rho_k^{-1}\al_k, \quad z\in\bbC\backslash\dD,\; k\in\Z.&
\lb{3.8a}
\end{align}

Next, we denote by $M_\pm(\ze,k_0)$, $M_{1,1}(\ze,k_0)$,
$\Phi_\pm(\ze,k_0)$, and $\Phi_{1,1}(\ze,k_0)$, $\ze\in\dD$, etc.,
the radial limits to the unit circle of the corresponding functions,
\begin{align}
M_\pm(\ze,k_0) &= \lim_{r \uparrow 1} M_\pm(r\ze,k_0), &&
M_{1,1}(\ze,k_0) = \lim_{r \uparrow 1} M_{1,1}(r\ze,k_0),
\\
\Phi_\pm(\ze,k_0) &= \lim_{r \uparrow 1} \Phi_\pm(r\ze,k_0), &&
\Phi_{1,1}(\ze,k_0) = \lim_{r \uparrow 1} \Phi_{1,1}(r\ze,k_0),
\quad \ze\in\dD,\; k_0\in\Z. \no
\end{align}
These limits are known to exist Lebesgue almost everywhere on $\dD$.

The following definition of reflectionless Verblunsky coefficients
represents the analog of reflectionless coefficients in
Schr\"odinger, Dirac, Jacobi, and CMV operators (cf., e.g.
\cite{BGMS05}, \cite{CG02}, \cite{CGHL00}, \cite{CGR05} \cite{KS88},
in the matrix-valued coefficients context and \cite{Cr89},
\cite{DS83}, \cite{GKT96}, \cite{GS96}, \cite{GZ06}, \cite{GJ84},
\cite{GJ86}, \cite{Jo82}, \cite{Ko84}--\cite{KS88},
\cite{SY95}--\cite{Te00} in the scalar-valued coefficients context).

\begin{definition} \lb{d3.1}
Assume Hypothesis \ref{h2.1} and let $\U$ be the associated unitary
CMV operator on $\ltm{\Z}$ as defined in \eqref{2.2b}--\eqref{2.3}.
Then $\alpha$ (resp., $\U$) is called {\it reflectionless}, if
\begin{equation}
\text{for some $k_0\in\bbZ$, } \, M_+(\zeta,k_0)=-M_-(\zeta,k_0)^*
\, \text{ for $\mu_0$-a.e.\ $\zeta\in \sigma_{\ess}(\U)$.} \lb{3.9}
\end{equation}
\end{definition}

The following result provides a variety of necessary and sufficient
conditions for $\alpha$ (resp., $\U$) to be reflectionless.

\begin{theorem} \lb{t3.2}
Let $\alpha=\{\alpha_k\}_{k\in\bbZ}$ satisfy Hypothesis \ref{h2.1}
and $\U$ denote the associated unitary CMV operator on $\ltm{\Z}$.
Then the following assertions $(i)$--$(vi)$ are equivalent:
\\
$(i)$ $\alpha=\{\alpha_k\}_{k\in\bbZ}$ is reflectionless.
\\
$(ii)$ $\beta=\{\ga_1\alpha_k\ga_2^*\}_{k\in\bbZ}$ is
reflectionless, where $\ga_1$, $\ga_2$ are $m\times m$ unitary
matrices.
\\
$(iii)$ For some $k_0\in\bbZ$, $M_+(\zeta,k_0)^*=-M_-(\zeta,k_0)$
for $\mu_0$-a.e.\ $\zeta\in \sigma_{\ess}(\U)$.
\\
$(iv)$ For all $k\in\bbZ$, $M_+(\zeta,k)^*=-M_-(\zeta,k)$ for
$\mu_0$-a.e.\ $\zeta\in \sigma_{\ess}(\U)$.
\\
$(v)$ For some $k_0\in\bbZ$,
$\Phi_+(\zeta,k_0)^*=\Phi_-(\zeta,k_0)^{-1}$ for $\mu_0$-a.e.\
$\zeta\in\sigma_{\ess}(\U)$.
\\
$(vi)$ For all $k\in\bbZ$, $\Phi_+(\zeta,k)^*=\Phi_-(\zeta,k)^{-1}$
for $\mu_0$-a.e.\ $\zeta\in\sigma_{\ess}(\U)$.
\\[1mm]
Moreover, conditions $(i)$--$(vi)$ imply the following equivalent
assertions $(vii)$--$(ix)$:
\\
$(vii)$ For all $k\in\bbZ$, $M_{1,1}(\zeta,k) > 0$ for $\mu_0$-a.e.\
$\zeta\in\sigma_{\ess}(\U)$.
\\
$(viii)$ For all $k\in\bbZ$, $-I_m<\Phi_{1,1}(\zeta,k)<I_m$ for
$\mu_0$-a.e.\ $\zeta\in\sigma_{\ess}(\U)$.
\\
$(ix)$ For all $k\in\bbZ$, $\Xi_{1,1}(\zeta,k)=0$ for $\mu_0$-a.e.\
$\zeta\in\sigma_{\ess}(\U)$.
\\
\end{theorem}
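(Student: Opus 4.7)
\medskip

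\noindent\textbf{Proof proposal.} The plan is to prove the equivalences in three natural clusters and then the implication to $(vii)$--$(ix)$.

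First, I would dispose of the easy reformulations. The equivalence $(i)\Leftrightarrow(iii)$ is obtained simply by taking adjoints of the identity $M_+(\zeta,k_0)=-M_-(\zeta,k_0)^*$ (both sides are well-defined a.e.\ on $\sigma_{\mathrm{ess}}(\U)$ since radial limits exist $\mu_0$-a.e.\ on $\dD$). The equivalence $(iii)\Leftrightarrow(v)$ is a purely algebraic computation using the Cayley-type transform \eqref{3.4}: writing $\Phi_\pm=(M_\pm-I_m)(M_\pm+I_m)^{-1}$ one finds
\[
\Phi_+^*=(M_+^*+I_m)^{-1}(M_+^*-I_m), \qquad \Phi_-^{-1}=(M_-+I_m)(M_--I_m)^{-1},
\]
and substituting $M_+^*=-M_-$ the equality $\Phi_+^*=\Phi_-^{-1}$ reduces to the identity $(M_--I_m)(M_-+I_m)=(M_-+I_m)(M_--I_m)=M_-^2-I_m$, which holds automatically. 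The implications $(iii)\Rightarrow(iv)$ and $(v)\Rightarrow(vi)$ are obtained by propagation in $k$: starting from the reflectionless identity at $k_0$ and applying the Riccati-type equations \eqref{3.8}, \eqref{3.8a} (taking radial limits to $\dD$), one can solve for $\Phi_\pm(\zeta,k\pm1)$ in terms of $\Phi_\pm(\zeta,k)$ and verify that the relation $\Phi_+^*=\Phi_-^{-1}$ propagates from $k$ to $k\pm1$. The converse directions $(iv)\Rightarrow(iii)$, $(vi)\Rightarrow(v)$ are trivial.

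For $(i)\Leftrightarrow(ii)$, I would show that the map $\alpha_k\mapsto\gamma_1\alpha_k\gamma_2^*$ corresponds to a unitary conjugation of $\U$ by a block-diagonal unitary built from $\gamma_1,\gamma_2$ on the even/odd sublattices. Since this induces a congruence $M_\pm^\beta(z,k_0)=\gamma\,M_\pm^\alpha(z,k_0)\,\gamma^*$ with a unitary $\gamma\in\{\gamma_1,\gamma_2\}$ (depending on the parity of $k_0$), and since $M_+=-M_-^*$ is preserved by such conjugations, the reflectionless property transfers.

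The implications $(i)$--$(vi)\Rightarrow(vii)$--$(ix)$ are the substance. Using $(vi)$ together with \eqref{3.7}, for $\mu_0$-a.e.\ $\zeta\in\sigma_{\mathrm{ess}}(\U)$,
\[
\Phi_{1,1}(\zeta,k)=\Phi_+(\zeta,k)\Phi_-(\zeta,k)^{-1}=\Phi_+(\zeta,k)\Phi_+(\zeta,k)^*
\quad(k\text{ even}),
\]
and analogously $\Phi_{1,1}=\Phi_+^*\Phi_+$ for $k$ odd. Hence $\Phi_{1,1}(\zeta,k)$ is self-adjoint and nonnegative. Contractivity $\|\Phi_\pm(\zeta,k)\|\le 1$ (standard for Schur matrices with boundary values) together with $(vi)$, which forces $\Phi_+(\zeta,k)$ to be unitary modulo inversion at a.e.\ point where $\Phi_-^{-1}$ exists, gives in fact $\Phi_+\Phi_+^*<I_m$ a.e., yielding $(viii)$. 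Inverting the Cayley transform in \eqref{2.3h} and using that $\Phi_{1,1}$ is self-adjoint with spectrum in $(-1,1)$ a.e., one obtains that $M_{1,1}(\zeta,k)$ is self-adjoint and strictly positive, which is $(vii)$. Finally, $(vii)\Leftrightarrow(ix)$ follows from the exponential Herglotz representation \eqref{2.7}: $\Xi_{1,1}=\lim_{r\uparrow1}\Im[\ln M_{1,1}(r\zeta,k)]$ vanishes exactly where $M_{1,1}(\zeta,k)$ is positive self-adjoint (so that $\ln M_{1,1}$ is self-adjoint). The equivalences $(vii)\Leftrightarrow(viii)\Leftrightarrow(ix)$ among themselves are thus immediate from the Cayley transform and the definition of $\Xi_{1,1}$.

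I expect the Riccati propagation step $(v)\Rightarrow(vi)$ to be the main technical obstacle, since one must invert the Riccati relations on the unit circle at boundary values and verify that the adjoint/inverse structure is preserved; a second subtle point is establishing strict contractivity $\|\Phi_+(\zeta,k)\|<1$ a.e.\ on $\sigma_{\mathrm{ess}}(\U)$ needed for the strict inequalities in $(vii)$, $(viii)$, which I would handle by noting that the boundary values $\Phi_+$ coincide with the unitary $\Phi_-^{-1}$ adjoints on $\sigma_{\mathrm{ess}}(\U)$ and applying Fatou-type theorems for contractive analytic matrix functions outside a null set.
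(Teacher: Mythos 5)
Your architecture for the equivalences $(i)$--$(vi)$ matches the paper's: adjoints for $(i)\Leftrightarrow(iii)$, Cayley-transform algebra for $(iii)\Leftrightarrow(v)$ (your computation reducing $\Phi_+^*=\Phi_-^{-1}$ to the commutation $(M_-+I_m)(M_--I_m)=(M_--I_m)(M_-+I_m)$ is correct), and propagation in $k$ via the Riccati equations \eqref{3.8}, \eqref{3.8a}. However, in $(i)\Leftrightarrow(ii)$ your claimed congruence $M_\pm^\beta(z,k_0)=\gamma M_\pm^\alpha(z,k_0)\gamma^*$ is not correct when $\gamma_1\neq\gamma_2$. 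The conjugation $\U_\beta=\Gamma_1\U_\alpha\Gamma_1^*$ transforms the Weyl solutions by \emph{different} unitaries on the even and odd sublattices, and what one actually obtains is the two-sided relation $\Phi_{\alpha;\pm}(z,k)=\gamma_2^*\Phi_{\beta;\pm}(z,k)\gamma_1$; this is not a congruence of $M_\pm$ by a single unitary. The repair is to bypass $M_\pm$ entirely and check that the two-sided transformation of $\Phi_\pm$ preserves condition $(v)$: $(\gamma_2\Phi_+\gamma_1^*)^*=\gamma_1\Phi_+^*\gamma_2^*=\gamma_1\Phi_-^{-1}\gamma_2^*=(\gamma_2\Phi_-\gamma_1^*)^{-1}$, which is exactly how the paper proceeds.

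The second and more serious gap is in $(i)$--$(vi)\Rightarrow(vii)$, $(viii)$. Your identity $\Phi_{1,1}(\zeta,k)=\Phi_+(\zeta,k)\Phi_+(\zeta,k)^*$ (for $k$ even) from \eqref{3.7} and $(vi)$ correctly gives $0\le\Phi_{1,1}(\zeta,k)\le I_m$, but contractivity of Schur matrices yields only the non-strict bound; your justification for $\Phi_+\Phi_+^*<I_m$ (``$\Phi_+$ is unitary modulo inversion\dots'') does not parse — if $\Phi_+(\zeta,k)$ were unitary one would get $\Phi_{1,1}=I_m$, the opposite of what is needed, and nothing you cite rules out $\|\Phi_+(\zeta,k)x\|=\|x\|$ for some vector $x$ on a set of positive measure. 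Without the strict inequality, $I_m-\Phi_{1,1}(\zeta,k)$ need not be invertible and $(vii)$ does not follow from \eqref{2.3h}. A correct fix is either to invoke that $M_{1,1}(\cdot,k)$ has finite radial limits $\mu_0$-a.e.\ (so $I_m-\Phi_{1,1}(\zeta,k)$ is invertible a.e.\ and the Cayley transform of a nonnegative self-adjoint $\Phi_{1,1}$ is strictly positive), or to follow the paper's route: the resolvent formula \eqref{3.10} expresses
\begin{equation*}
M_{1,1}(\zeta,k)=I_m+U_\mp(\zeta,k,k_0)W(\zeta,k_0)^{-1}U_\mp(\zeta,k,k_0)^*,
\end{equation*}
where under $(iii)$ the Wronskian satisfies $W(\zeta,k_0)=M_+(\zeta,k_0)+M_+(\zeta,k_0)^*=2\Re(M_+(\zeta,k_0))\ge 0$ on $\sigma_{\ess}(\U)$, so that $M_{1,1}(\zeta,k)$ is self-adjoint and strictly positive directly. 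Your treatment of $(vii)\Leftrightarrow(viii)\Leftrightarrow(ix)$ via \eqref{2.3h} and \eqref{2.8} is fine and agrees with the paper.
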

\begin{proof}
We start by noting that $(i)$ is equivalent to $(iii)$ by Definition
\eqref{d3.1}, $(iii)$ and $(iv)$ are equivalent to $(v)$ and $(vi)$,
respectively, by \eqref{3.4}, $(vii)$ is equivalent to $(viii)$ by
\eqref{2.3h}, and $(vii)$ is equivalent to $(ix)$ by \eqref{2.8}.

Next, we show that $(v)$ is equivalent to $(vi)$. One direction is
trivial and for the other it suffices to check that $(v)$ at some
point $k_0$ implies $(v)$ at points $k_0\pm1$. Taking adjoint in
\eqref{3.8} and solving for $\Phi_+(\ze,k_0-1)^*$ one obtains
\begin{align}
\Phi_+(\ze,k_0-1)^*=
\big(\ol{\ze}\wti\rho_{k_0}^{-1}\Phi_+(\ze,k_0)^* -
\ol{\ze}\al_{k_0}\rho_{k_0}^{-1}\big)
\big(-\al_{k_0}^*\wti\rho_{k_0}^{-1}\Phi_+(\ze,k_0)^* +
\rho_{k_0}^{-1}\big)^{-1}. \lb{3.9a}
\end{align}
Similarly, solving \eqref{3.8a} for $\Phi_-(\ze,k_0)^{-1}$ one
computes
\begin{align}
\Phi_-(\ze,k_0)^{-1} &=
\big(\ze\wti\rho_{k_0}^{-1}\Phi_-(\ze,k_0-1)^{-1} +
\wti\rho_{k_0}^{-1}\al_{k_0}\big) \no
\\
&\quad\times
\big(\ze\rho_{k_0}^{-1}\al_{k_0}^*\Phi_-(\ze,k_0-1)^{-1} +
\rho_{k_0}^{-1}\big)^{-1}. \lb{3.9b}
\end{align}
Since by $(v)$ $\Phi_+(\ze,k_0)^* = \Phi_-(\ze,k_0)^{-1}$ for
$\mu_0$-a.e.\ $\zeta\in\sigma_{\ess}(\U)$, insertion of \eqref{3.9b}
into \eqref{3.9a} yields
\begin{align}
\Phi_+(\ze,k_0-1)^* &=
\big(\wti\rho_{k_0}^{-2}\Phi_-(\ze,k_0-1)^{-1} +
\ol{\ze}\wti\rho_{k_0}^{-2}\al_{k_0} \no
\\
&\quad\quad -
\al_{k_0}\rho_{k_0}^{-2}\al_{k_0}^*\Phi_-(\ze,k_0-1)^{-1} -
\ol{\ze}\al_{k_0}\rho_{k_0}^{-2} \big) \no
\\
&\quad\times
\big(-\ze\al_{k_0}^*\wti\rho_{k_0}^{-2}\Phi_-(\ze,k_0-1)^{-1} -
\al_{k_0}^*\wti\rho_{k_0}^{-2}\al_{k_0}^* \no
\\
&\quad\quad + \ze\rho_{k_0}^{-2}\al_{k_0}^*\Phi_-(\ze,k_0-1)^{-1} +
\rho_{k_0}^{-2}\big)^{-1} \no
\\
& = \Phi_-(\ze,k_0-1)^{-1} \quad \text{for $\mu_0$-a.e.\
$\zeta\in\sigma_{\ess}(\U)$}.
\end{align}
Here \eqref{2.2} and \eqref{2.2a} were used to simplify the
expression. Thus, $(v)$ at $k_0$ implies $(v)$ at $k_0-1$.
Similarly, one shows that $(v)$ at $k_0$ also implies $(v)$ at
$k_0+1$.

Next, we verify that $(iii)$ implies $(vii)$. Recall that by Lemma
3.3 in \cite{CGZ07} the resolvent $(\U-zI)^{-1}$ is given in terms
of its matrix elements in the standard basis of $\ltm{\Z}$ by
\begin{align}
(\U-zI)^{-1}(k,k') = \frac{1}{2z}
\begin{cases}
U_-(z,k,k_0)W(z,k_0)^{-1}U_+(1/\ol{z},k',k_0)^*,
\\\hspace{27mm} k < k' \text{ or } k = k' \text{ odd},
\\
U_+(z,k,k_0)W(z,k_0)^{-1}U_-(1/\ol{z},k',k_0)^*,
\\\hspace{26mm} k > k' \text{ or } k = k' \text{ even},
\end{cases} \lb{3.10}
\end{align}
where $W(z,k_0)=M_+(z,k_0)-M_-(z,k_0)$ is the Wronskian of $U_+$ and
$U_-$. Using $(iii)$, \eqref{3.3}, \eqref{A.7}, and the fact that
$P_+(z,k,k_0)$ and $Q_+(z,k,k_0)$ are Laurent polynomials in $z$ and
hence are analytic in $\C\bs\{0\}$, one computes
\begin{align}
\lim_{r \uparrow 1}U_\pm(1/(r\ol{\ze}),k,k_0) &= \lim_{r \uparrow
1}[Q_+(\ze/r,k,k_0)-P_+(\ze/r,k,k_0)M_\pm(r\ze,k_0)^*] \no
\\
&=Q_+(\ze,k,k_0)-P_+(\ze,k,k_0)M_\pm(\ze,k_0)^* \no
\\
&=Q_+(\ze,k,k_0)+P_+(\ze,k,k_0)M_\mp(\ze,k_0) \no
\\
&=\lim_{r \uparrow 1}U_\mp(r\ze,k,k_0) \,\text{ for $\mu_0$-a.e.\
$\zeta\in\sigma_{\ess}(\U)$}. \lb{3.11}
\end{align}
Thus, combining $(iii)$, \eqref{2.3d}, \eqref{3.10}, and
\eqref{3.11} one concludes for all $k\in\Z$ and $\mu_0$-a.e.\
$\zeta\in\sigma_{\ess}(\U)$ that
$W(\ze,k_0)=M_+(\ze,k_0)+M_+(\ze,k_0)^*=W(\ze,k_0)^*$, and hence,
\begin{align}
M_{1,1}(\ze,k) &= I_m + 2z\De_{k}(\U-zI)^{-1}\De_{k} = I_m +
2z(\U-zI)^{-1}(k,k) \no
\\
&=I_m +
\begin{cases}
U_-(\ze,k,k_0)W(z,k_0)^{-1}U_-(\ze,k,k_0)^*, & k \text{ odd},
\\
U_+(\ze,k,k_0)W(z,k_0)^{-1}U_+(\ze,k,k_0)^*, & k \text{ even}
\end{cases} \lb{3.12}
\end{align}
is a nonnegative Caratheodory matrix, that is, $(vii)$ holds.

Finally, we check that $(i)$ is equivalent to $(ii)$. First, note
that $\be_k=\ga_1\al_k\ga_2^*$, $k\in\Z$, implies the following
relations for matrices defined in \eqref{2.2}--\eqref{2.3}
associated with Verblunsky coefficients $\al$ and $\be$,
respectively,
\begin{align}
&\rho_{\be;k} = \ga_2\rho_{\al;k}\ga_2^*, \quad \wti\rho_{\be;k} =
\ga_1\wti\rho_{\al;k}\ga_1^*,
\\
&\Te_{\be;k} =
\begin{pmatrix}\ga_1&0\\0&\ga_2\end{pmatrix}
\Te_{\al;k}
\begin{pmatrix}\ga_2&0\\0&\ga_1\end{pmatrix}^*, \quad k\in\Z,
\end{align}
and hence,
\begin{align}
& \V_{\be}=\Ga_1\V_{\al}\Ga_2^*, \quad
\W_{\be}=\Ga_2\W_{\al}\Ga_1^*, \quad \U_{\be}=\Ga_1\U_{\al}\Ga_1^*,
\lb{3.16}
\end{align}
where $\Ga_1$ and $\Ga_2$ are block-diagonal unitary operators on
$\ltm{\Z}$ with diagonals given by
\begin{align}
\Ga_1(k,k) =
\begin{cases}
\ga_1, & k \text{ odd},
\\
\ga_2, & k \text{ even},
\end{cases} \quad
\Ga_2(k,k) =
\begin{cases}
\ga_2, & k \text{ odd},
\\
\ga_1, & k \text{ even},
\end{cases} \quad k\in\Z. \lb{3.17}
\end{align}
Then it follows from \eqref{3.16} and the definition of the
Weyl--Titchmarsh solutions $U_{\be;\pm}$, $V_{\be;\pm}$ associated
with the CMV operator $\U_\be$,
\begin{align}
&\W_{\be} U_{\be;\pm}(z,\cdot,k_0) = z V_{\be;\pm}(z,\cdot,k_0),
\quad \V_{\be} V_{\be;\pm}(z,\cdot,k_0) = U_{\be;\pm}(z,\cdot,k_0),
\no
\\
&U_{\be;\pm}(z,\cdot,k_0), \, V_{\be;\pm}(z,\cdot,k_0) \in
\ltmm{[k_0,\pm\infty)\cap\Z},
\end{align}
that the $\Cm$-valued sequences $\Ga_1^*U_{\be;\pm}$ and
$\Ga_2^*V_{\be;\pm}$ satisfy
\begin{align}
&\W_{\al} \Ga_1^*U_{\be;\pm}(z,\cdot,k_0) = z
\Ga_2^*V_{\be;\pm}(z,\cdot,k_0), \quad \V_{\al}
\Ga_2^*V_{\be;\pm}(z,\cdot,k_0) = \Ga_1^*U_{\be;\pm}(z,\cdot,k_0),
\no
\\
&\Ga_1^* U_{\be;\pm}(z,\cdot,k_0), \, \Ga_2^*
V_{\be;\pm}(z,\cdot,k_0) \in \ltmm{[k_0,\pm\infty)\cap\Z}.
\end{align}
Thus, the uniqueness of the Weyl--Titchmarsh solutions associated
with $\U_{\al}$ implies that there is an $m\times m$ matrix $C$ such
that
\begin{align}
\Ga_1^*U_{\be;\pm}(z,k,k_0)=U_{\al;\pm}(z,k,k_0)C, \quad
\Ga_2^*V_{\be;\pm}(z,k,k_0)=V_{\al;\pm}(z,k,k_0)C,
\end{align}
equivalently,
\begin{align}
& U_{\al;\pm}(z,k,k_0) =
\begin{cases}
\ga_1^*U_{\be;\pm}(z,k,k_0)C, & k \text{ odd},\\
\ga_2^*U_{\be;\pm}(z,k,k_0)C, & k \text{ even},
\end{cases} \lb{3.21}
\\
& V_{\al;\pm}(z,k,k_0)=
\begin{cases}
\ga_2^*V_{\be;\pm}(z,k,k_0)C, & k \text{ odd},\\
\ga_1^*V_{\be;\pm}(z,k,k_0)C, & k \text{ even},
\end{cases} \quad k\in\Z. \lb{3.22}
\end{align}
Inserting \eqref{3.21} and \eqref{3.22} into \eqref{3.6} we get
\begin{align}
\Phi_{\al;\pm}(z,k) = \ga_2^*\Phi_{\be;\pm}(z,k)\ga_1, \quad z\in\D,
\; k\in\Z. \lb{3.23}
\end{align}
Since by \eqref{3.16} $\U_{\al}$ is unitarily equivalent to
$\U_{\be}$ and hence $\si_\ess(\U_{\al})=\si_\ess(\U_{\be})$, we
conclude from \eqref{3.23} that $\Phi_{\al;\pm}(z,k)$ satisfy $(v)$
if and only if $\Phi_{\be;\pm}(z,k)$ do. Hence, the previously
established equivalence of $(i)$ and $(v)$ finishes the proof.
\end{proof}

It is instructive to state as a separate result the following fact
obtained in the proof of Theorem \ref{t3.2} (cf. \eqref{3.16} and
\eqref{3.17}).

\begin{theorem} \lb{t3.3}
Let $\al=\{\al_k\}_{k\in\bbZ}$ be a sequence satisfying Hypothesis
\ref{h2.1} and fix two $m\times m$ unitary matrices $\gamma_1$,
$\gamma_2$. Define $\be=\{\gamma_1\alpha_k\gamma_2^*\}_{k\in\bbZ}$.
Then the CMV operators $U_\al$ and $U_\be$ associated with $\al$ and
$\be$, respectively, are unitarily equivalent.
\end{theorem}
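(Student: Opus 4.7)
The plan is to exhibit an explicit unitary operator implementing the equivalence; the formula is already embedded in equations \eqref{3.16}--\eqref{3.17} obtained en route to Theorem \ref{t3.2}, so the task is mainly to isolate this computation as a standalone verification.

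First, I would introduce the two block-diagonal unitary operators $\Gamma_1,\Gamma_2$ on $\ltm{\Z}$ given by \eqref{3.17}, observing that each is unitary because its diagonal blocks are unitary $m\times m$ matrices. Next, from the defining relation $\be_k=\ga_1\al_k\ga_2^*$ and the functional-calculus identities $\rho_{\be;k}=(I_m-\be_k^*\be_k)^{1/2}$, $\wti\rho_{\be;k}=(I_m-\be_k\be_k^*)^{1/2}$, I would verify by a direct power-series argument that
\begin{equation}
\rho_{\be;k}=\ga_2\rho_{\al;k}\ga_2^*,\qquad \wti\rho_{\be;k}=\ga_1\wti\rho_{\al;k}\ga_1^*,\qquad k\in\Z.
\end{equation}
Plugging these into the definition \eqref{2.2b} of the $2\times2$ block matrix $\Te_{\be;k}$ then gives the intertwining identity
\begin{equation}
\Te_{\be;k}=\begin{pmatrix}\ga_1&0\\0&\ga_2\end{pmatrix}\Te_{\al;k}\begin{pmatrix}\ga_2&0\\0&\ga_1\end{pmatrix}^{\!*},
\end{equation}
which is a short calculation since the scalar factors $\ga_1,\ga_2,\ga_1^*,\ga_2^*$ land in precisely the right slots on each of the four entries of $\Te_{\al;k}$.

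The second step is to lift this blockwise identity to the full infinite matrices $\V$ and $\W$. Since $\V$ is assembled from the blocks $\Te_{2k}$ along the even indices while $\W$ is assembled from $\Te_{2k+1}$ along the odd indices, and since $\Ga_1$ and $\Ga_2$ act as $\ga_1$ on odd indices and $\ga_2$ on even indices (respectively swapped for $\Ga_2$), the block-diagonal structure forces
\begin{equation}
\V_\be=\Ga_1\V_\al\Ga_2^*,\qquad \W_\be=\Ga_2\W_\al\Ga_1^*.
\end{equation}
Multiplying these and using $\Ga_2^*\Ga_2=I$ then yields $\U_\be=\V_\be\W_\be=\Ga_1\U_\al\Ga_1^*$, which is the desired unitary equivalence.

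The only mild subtlety is bookkeeping the parity: making sure that the odd/even assignment of $\ga_1,\ga_2$ in \eqref{3.17} is consistent with the pattern by which $\Te_{2k}$ appears in $\V$ and $\Te_{2k+1}$ in $\W$, so that the conjugating factors on the left and right of each $\Te$-block match up correctly. This is simply a matter of tracking indices and is where one must be careful but not creative. Since all of these computations appeared already in the proof of Theorem \ref{t3.2}, the proof of Theorem \ref{t3.3} reduces to a brief restatement of that bookkeeping followed by the identity $\U_\be=\Ga_1\U_\al\Ga_1^*$.
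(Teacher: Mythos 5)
Your proposal is correct and follows essentially the same route as the paper: the paper proves Theorem \ref{t3.3} precisely by the chain of identities \eqref{3.16}--\eqref{3.17} established inside the proof of Theorem \ref{t3.2}, namely $\rho_{\be;k}=\ga_2\rho_{\al;k}\ga_2^*$, $\wti\rho_{\be;k}=\ga_1\wti\rho_{\al;k}\ga_1^*$, the blockwise conjugation of $\Te_{\al;k}$, and the resulting relations $\V_\be=\Ga_1\V_\al\Ga_2^*$, $\W_\be=\Ga_2\W_\al\Ga_1^*$, $\U_\be=\Ga_1\U_\al\Ga_1^*$. Nothing is missing; your parity bookkeeping for the placement of $\ga_1$, $\ga_2$ on odd versus even indices is exactly the point the paper's construction of $\Ga_1$, $\Ga_2$ in \eqref{3.17} takes care of.
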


\section{The Borg-Type Theorem for CMV Operators} \label{s4}

In this section we finally prove our principal new result, a general
Borg-type theorem for reflectionless CMV operators with spectrum a
connected subarc of the unit circle. We freely use the notation
established in Appendix \ref{sA}.

First, we prove the following uniqueness result which is a special
case of Borg-type theorem for reflectionless CMV operators.
\begin{theorem}  \lb{t4.1}
Let $\alpha=\{\alpha_k\}_{k\in\bbZ}$ be a reflectionless sequence of
$m\times m$ matrix-valued Verblunsky coefficients. Let $\U$ be the
associated unitary CMV operator \eqref{2.3} on $\ltm{\Z}$ and
suppose that
\begin{equation}
\sigma(\U)=\dD. \lb{4.1}
\end{equation}
Then $\alpha=\{\alpha_k\}_{k\in\bbZ}$ is of the form,
\begin{equation}
\alpha_k=0, \quad k\in\bbZ. \lb{4.2}
\end{equation}
\end{theorem}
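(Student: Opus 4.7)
My plan is to reduce the full Verblunsky sequence to zero by first showing that the Schur matrix $\Phi_+(\cdot,k)$ vanishes identically in $\D$, and then extracting $\alpha_k=0$ from the Riccati equation \eqref{3.8}.

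First, since $\si(\U)=\dD$ implies $\si_{\ess}(\U)=\dD$, the reflectionless hypothesis combined with condition $(ix)$ of Theorem \ref{t3.2} gives $\Xi_{1,1}(\zeta,k_0)=0$ for $\mu_0$-a.e.\ $\zeta\in\dD$ and every $k_0\in\Z$. Plugging this into the exponential Herglotz representation \eqref{2.7} instantly yields $\ln[M_{1,1}(z,k_0)]=0$ throughout $\D$, hence $M_{1,1}(z,k_0)=I_m$ for all $z\in\D$ and $k_0\in\Z$. (Equivalently, one may observe that all the trace-formula coefficients $L_j(\U,k_0)$ in Theorem \ref{t2.4} vanish, so the Taylor series \eqref{2.16} is identically zero near $z=0$, and analyticity in $\D$ extends this to all of $\D$.)

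Second, the Schur/Caratheodory transform \eqref{2.3h} then gives $\Phi_{1,1}(z,k_0)\equiv 0$ in $\D$, so the boundary values also vanish $\mu_0$-a.e.\ on $\dD$. Using the factorization \eqref{3.7} together with condition $(v)$ of Theorem \ref{t3.2}, namely $\Phi_+(\zeta,k_0)^*=\Phi_-(\zeta,k_0)^{-1}$ $\mu_0$-a.e.\ on $\dD$, the identity $\Phi_{1,1}=0$ becomes $\Phi_+(\zeta,k_0)^*\Phi_+(\zeta,k_0)=0$ for $k_0$ odd and $\Phi_+(\zeta,k_0)\Phi_+(\zeta,k_0)^*=0$ for $k_0$ even. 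In either parity, $\Phi_+(\zeta,k_0)=0$ for $\mu_0$-a.e.\ $\zeta\in\dD$. Since each entry of $\Phi_+(\cdot,k_0)$ is a bounded analytic function on $\D$ (the matrix being Schur), the classical boundary uniqueness theorem for $H^\infty(\D)$ forces $\Phi_+(z,k_0)\equiv 0$ on $\D$ for every $k_0\in\Z$.

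Finally, substituting $\Phi_+(\cdot,k)\equiv 0$ and $\Phi_+(\cdot,k-1)\equiv 0$ into the Riccati equation \eqref{3.8} collapses the entire left-hand side and leaves $z\rho_k^{-1}\al_k^*=0$. Invertibility of $\rho_k$ and freedom to choose $z\neq 0$ give $\al_k=0$ for all $k\in\Z$, as required. The only subtle point in this program is the passage from a.e.\ boundary vanishing to identical vanishing in step two: one must ensure that each scalar entry of $\Phi_+(\cdot,k_0)$ is genuinely in $H^\infty(\D)$ so that Privalov's uniqueness theorem applies. Once this is checked, the rest of the argument is a sequence of routine substitutions.
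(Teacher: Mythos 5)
Your argument is correct and follows essentially the same route as the paper: reflectionlessness plus $\sigma(\U)=\dD$ forces $\Xi_{1,1}=0$, hence $M_{1,1}\equiv I_m$ and $\Phi_{1,1}\equiv 0$, which via \eqref{3.7} and condition $(v)$ of Theorem \ref{t3.2} gives $\Phi_+(\zeta,k)=\Phi_-(\zeta,k)^{-1}=0$ a.e.\ on $\dD$, and the Riccati equation \eqref{3.8} then yields $\alpha_k=0$. The only (immaterial) difference is that the paper substitutes the a.e.\ boundary values directly into the radial limits of \eqref{3.8}--\eqref{3.8a}, whereas you first upgrade to $\Phi_+\equiv 0$ on $\D$ via the $H^\infty$ uniqueness theorem --- a step that is indeed justified since the entries of a Schur matrix are bounded by $1$.
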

\begin{proof}
Since by hypothesis $\U$ is reflectionless, one infers from
Definition \ref{d3.1} and Theorem \ref{t3.2} that
\begin{equation}
\Phi_+(\ze,k)^*=\Phi_-(\ze,k)^{-1}, \quad \mu_0\text{-a.e. }
\ze\in\dD, \; k\in\Z \lb{4.3}
\end{equation}
and
\begin{equation}
\Xi_{1,1}(\ze,k)=0, \quad \mu_0\text{-a.e. } \ze\in\dD, \; k\in\Z.
\lb{4.4}
\end{equation}
Then it follows from \eqref{2.7} and \eqref{4.4} that
$M_{1,1}(z,k)=I_m$ for all $z\in\D$, $k\in\Z$, and hence by
\eqref{2.3h}
\begin{align}
\Phi_{1,1}(z,k)=0, \quad z\in\D, \; k\in\Z. \lb{4.5}
\end{align}
This together with \eqref{3.7} and \eqref{4.3} implies
\begin{align}
\Phi_+(\ze,k)=\Phi_-(\ze,k)^{-1}=0, \quad \mu_0\text{-a.e. }
\ze\in\dD, \; k\in\Z. \lb{4.6}
\end{align}
Taking radial limits in the Riccati-type equations \eqref{3.8},
\eqref{3.8a} and substituting \eqref{4.6} into the left hand-sides
yield
\begin{align}
\al_{k}=0, \quad k\in\Z. \lb{4.9}
\end{align}
\end{proof}

Next, we introduce the following notation for closed arcs on the
unit circle $\dD$,
\begin{equation}
\Arc\big(\big[e^{i\theta_1},e^{i\theta_2}\big]\big)
=\big\{e^{i\theta}\in\dD\,|\, \theta_1\leq\theta\leq \theta_2\big\},
\quad \theta_1 \in [0,2\pi), \; \theta_1\leq \theta_2\leq
\theta_1+2\pi
\end{equation}
and similarly for open arcs and arcs open or closed at one endpoint
(cf.\ \eqref{A.5}).

The principal new result of this paper then reads as follows.

\begin{theorem}  \lb{t5.3}
Let $\alpha=\{\alpha_k\}_{k\in\bbZ}$ be a reflectionless sequence of
$m\times m$ matrix-valued Verblunsky coefficients. Let $\U$ be the
associated unitary CMV operator \eqref{2.3} on $\ltm{\Z}$ and
suppose that the spectrum of $\U$ consists of a connected arc of
$\dD$,
\begin{equation}
\sigma(\U)=\Arc\big(\big[e^{i\theta_0},e^{i\theta_1}\big]\big)
\lb{4.12}
\end{equation}
with $\theta_0 \in [0,2\pi)$, $\theta_0<\theta_1\leq\theta_0+2\pi$.
Then $\alpha=\{\alpha_k\}_{k\in\bbZ}$ is of the form,
\begin{equation}
\alpha_k=g^k a \ga, \quad k\in\bbZ, \lb{4.13}
\end{equation}
where
\begin{equation}
g=-\exp(i(\theta_0+\theta_1)/2), \quad
a=\cos((\theta_1-\theta_0)/4), \lb{4.14}
\end{equation}
and $\ga$ is some $k$-independent $m\times m$ unitary matrix.
\end{theorem}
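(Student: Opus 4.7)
The plan is to combine the reflectionless hypothesis, the single-arc spectral assumption, and the trace formulas of Section~\ref{s2} to pin down $\{\alpha_k\}_{k\in\Z}$, ultimately reducing the analysis to the full-circle case already handled by Theorem~\ref{t4.1}. The gauge flexibility provided by Theorem~\ref{t3.3} will let me place $\alpha_0$ into a canonical form so that the unknown at the end is essentially a single unitary $\gamma$.

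First I would invoke Theorem~\ref{t3.2}(ix): reflectionlessness together with $\sigma(\U) = \Arc([e^{i\theta_0}, e^{i\theta_1}])$ forces $\Xi_{1,1}(\zeta, k) = 0$ for $\mu_0$-a.e.\ $\zeta$ on the spectral arc, so the matrix-valued density $\Xi_{1,1}(\cdot, k)$ is supported on the complementary open arc $A = \Arc((e^{i\theta_1}, e^{i(\theta_0 + 2\pi)}))$, where it is constrained to take values in $[-(\pi/2)I_m, (\pi/2)I_m]$. Next, since $A$ lies in the resolvent set of $\U$, the half-lattice Weyl--Titchmarsh functions $M_\pm(z,k)$ extend analytically from $\D$ across $A$; combining this with the reflectionless relation $M_+(\zeta,k)^* = -M_-(\zeta,k)$ on the spectral arc and a Schwarz-type reflection should yield an explicit formula for $\Xi_{1,1}(\zeta, k)$ on $A$ in terms of the arc endpoints together with a single matrix parameter, which is the piece that will eventually encode $\gamma$.

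With $\Xi_{1,1}$ in hand, I would substitute it into the trace formulas \eqref{2.18} and evaluate $L_1(\U, k_0)$ and $L_2(\U, k_0)$; matching with the explicit expressions \eqref{2.19}, \eqref{2.19a} produces recurrence relations for the Verblunsky coefficients. Using Theorem~\ref{t3.3} with $\gamma_1, \gamma_2$ taken from a polar-type decomposition of $\alpha_0$, I would first normalize so that $\alpha_0 = a \gamma$ with $a \ge 0$ and $\gamma$ unitary. The recurrences should then collapse to $\alpha_{k+1} = g\, \alpha_k$ for a specific scalar $g$, forcing $\alpha_k = g^k \alpha_0 = g^k a \gamma$; the values $g = -\exp(i(\theta_0 + \theta_1)/2)$ and $a = \cos((\theta_1 - \theta_0)/4)$ are then read off by matching the arc endpoints in the $L_1$ trace formula.

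The principal obstacle is the precise determination of $\Xi_{1,1}$ on the complementary arc in the matrix-valued setting. In the scalar case \cite{GZ06} this is handled via a conformal map sending $A$ onto the full unit circle, which effectively reduces the problem to the full-spectrum case. The analogous manipulations here must respect the noncommutativity of the entries of $M_{1,1}$; the delicate point is arguing that the matrix structure on $A$ is governed by a single unitary $\gamma$ rather than by a $\zeta$-dependent family. An alternative route, which avoids this difficulty, is to apply a $k$-dependent diagonal unitary gauge stripping off the anticipated factor $g^k$ and then a rigid rotation of the spectrum, producing a reflectionless CMV operator whose spectrum is all of $\dD$; Theorem~\ref{t4.1} then forces the transformed coefficients to vanish, which translates back to \eqref{4.13}--\eqref{4.14}.
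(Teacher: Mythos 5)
Your overall skeleton --- pin down $\Xi_{1,1}$, feed it into the trace formulas for $L_1(\U,k_0)$ and $L_2(\U,k_0)$, and extract recurrences for $\al_k$ --- is the paper's strategy, and the second half of your plan would go through essentially as written. The genuine gap is the determination of $\Xi_{1,1}$ on the complementary arc $A$, which you correctly identify as the principal obstacle but do not resolve. Your guess that $\Xi_{1,1}|_A$ is given by ``the arc endpoints together with a single matrix parameter \dots\ which will eventually encode $\gamma$'' is off the mark: in fact $\Xi_{1,1}(\cdot,k)$ turns out to be an explicit \emph{scalar} multiple of $I_m$ on all of $\dD$ (namely $0$ on the spectral arc, and $+(\pi/2)I_m$, $-(\pi/2)I_m$ on the two pieces of $A$ separated by $\theta_*=(\theta_0+\theta_1)/2+\pi$), completely independent of $\gamma$; the unitary $\gamma$ enters only at the very end through $\al_0$ when one solves \eqref{2.19}. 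The paper's argument is: for each unit vector $x_0$ the scalar Caratheodory function $m_{x_0}(z,k)=\big(x_0,M_{1,1}(z,k)x_0\big)_{\C^m}$ is purely imaginary a.e.\ on $A$ (since $\supp(d\om_{x_0}(\cdot,k))=\si(\U)$) and strictly positive a.e.\ on $\si(\U)$ (by reflectionlessness and Theorem \ref{t3.2}\,$(vii)$), and differentiating the Herglotz representation shows $-im_{x_0}(e^{i\theta},k)$ is strictly decreasing in $\theta$ across the gap; hence $\xi_{x_0}$ equals $\pi/2$ and then $-\pi/2$ with a single transition point $\theta_*(x_0,k)$, which the normalization $m_{x_0}(0,k)=1$ (equivalently $\oint_{\dD} d\mu_0\,\xi_{x_0}=0$) pins to $(\theta_0+\theta_1)/2+\pi$ independently of $x_0$ and $k$. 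This scalar reduction together with the monotonicity and normalization step is exactly what your proposal is missing, and no Schwarz-reflection argument is needed.

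Your fallback ``alternative route'' does not work. A constant gauge $\al_k\mapsto\ga_1\al_k\ga_2^*$ (Theorem \ref{t3.3}) and the $k$-dependent twist $\al_k\mapsto g^{-k}\al_k$ both implement unitary equivalences that at most rotate the spectrum rigidly; they preserve the arc length $\theta_1-\theta_0$ and therefore can never convert a proper arc into all of $\dD$. Moreover, if such a reduction did exist and Theorem \ref{t4.1} forced the transformed coefficients to vanish, undoing the (invertible) transformation would give $\al_k=0$ for all $k$, contradicting $a=\cos((\theta_1-\theta_0)/4)>0$ for a proper arc. So the first route must be completed as above; it cannot be bypassed by reduction to the full-circle case.
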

\begin{proof}
First, note that in the special case $\si(\U)=\dD$ the result
follows from Theorem \ref{t4.1}. Hence without loss of generality we
will assume in the following that $\si(\U)\subsetneq\dD$, that is,
$\te_1-\te_0<2\pi$. Next, we proceed with the proof in two steps.

In our first step we find an explicit formula for the function
$\Xi_{1,1}(\cdot,k)$ on $\dD$, $k\in\Z$. To understand the behavior
of $\Xi_{1,1}(\cdot,k)$ it suffices by \eqref{2.8} to study the
behavior of the boundary values of $M_{1,1}(\cdot,k)$ on $\dD$.

We start by noting that Theorem \ref{tA.2} (cf. \eqref{A.4}) implies
that the Caratheodory matrix $M_{1,1}(\cdot,k)$ has purely imaginary
(i.e., $\Re(M_{1,1}(\cdot,k))=0$) boundary values $\mu_0\text{-a.e.}$
on $\dD\bs\supp(d\Om_{1,1}(\cdot,k))$. Moreover, it follows from
\eqref{2.3b} and Theorem \ref{t2.2} that
\begin{align}
\supp(d\Om_{1,1}(\cdot,k))\subseteq\supp(d\Om(\cdot,k))=\si(\U),
\quad k\in\Z. \lb{4.15}
\end{align}
On the other hand, the reflectionless assumption, Theorem \ref{t3.2}
$(vii)$, and \eqref{4.12} imply that $M_{1,1}(\cdot,k)$ has strictly
positive (i.e., $M_{1,1}(\cdot,k)>0$) boundary values
$\mu_0\text{-a.e.}$ on $\si_\ess(\U)=\si(\U)$. By Theorem \ref{tA.2}
(cf. \eqref{A.4}) this implies that
\begin{align}
\si(\U)=\si_\ess(\U)\subseteq\supp(d\Om_{1,1}(\cdot,k)), \quad
k\in\Z. \lb{4.16}
\end{align}
Thus, it follows from \eqref{4.15} and \eqref{4.16} that
$\si(\U)=\supp(d\Om_{1,1}(\cdot,k))$, $k\in\Z$. The same argument
actually implies more, namely, that for any $x_0\in\C^m$ with
$\|x_0\|_{\C^m}=1$, the scalar-valued Caratheodory function
$m_{x_0}(\cdot,k)$ defined by
\begin{align}
& m_{x_0}(z,k) = \big(x_0,M_{1,1}(z,k)x_0\big)_{\C^m} = \oint_{\dD}
d\om_{x_0}(\zeta,k) \, \f{\zeta+z}{\zeta-z}, \lb{4.17}
\\
& d\om_{x_0}(\cdot,k) =
d\big(x_0,\Omega_{1,1}(\cdot,k)x_0\big)_{\C^m}, \quad z\in\D, \;
k\in\Z,
\end{align}
is purely imaginary $\mu_0\text{-a.e.}$ on $\dD\bs\si(\U)$, strictly
positive $\mu_0\text{-a.e.}$ on $\si(\U)$, and
$\supp(d\om_{x_0}(\cdot,k))=\si(\U)$ for all $k\in\Z$.
Differentiating $-im_{x_0}(e^{i\te},k)$ with respect to $\te$ shows
that $\Im(m_{x_0}(\cdot,k))=-im_{x_0}(\cdot,k)$ is monotone
decreasing on $\dD\bs\si(\U)$,
\begin{equation}
\f{d}{d\theta}\big(-im_{x_0}(e^{i\theta},k)\big)=-\f{1}{2}
\int_{[0,2\pi)\backslash[\theta_0,\theta_1]}
\f{d\om_{x_0}(e^{it},k)}{\sin^2((t-\theta)/2)}<0, \quad
\theta\in(\theta_0,\theta_1). \lb{4.19}
\end{equation}
This implies that there exists a
$\theta_*(x_0,k)\in[\te_1,\te_0+2\pi]$ such that the exponential
Herglotz representation for $m_{x_0}(\cdot,k)$,
\begin{align}
& \ln[m_{x_0}(z,k)]=i \oint_{\dD} d\mu_0(\zeta)\, \xi_{k_0}(\zeta,k)
\f{\zeta+z}{\zeta-z}, \quad z\in\D, \lb{4.20}
\end{align}
yields the following form for the function $\xi_{x_0}(\zeta,k)=\lim_{r\uparrow 1}
\Im[\ln(m_{x_0}(r\zeta,k))]$,
\begin{align}
\xi_{x_0}(\zeta,k)=
\begin{cases} 0, &
\zeta\in\Arc\big(\big(e^{i\theta_0},e^{i\theta_1}\big)\big),\\
\pi/2, &
\zeta\in\Arc\big(\big(e^{i\theta_1},e^{i\theta_*(x_0,k)}\big)\big),\\
-\pi/2, &
\zeta\in\Arc\big(\big(e^{i\theta_*(x_0,k)},e^{i(\theta_0+2\pi)}\big)\big)
\end{cases}
\,\text{ for $\mu_0$-a.e.\ $\zeta\in\dD$}. \lb{4.21}
\end{align}
Since by \eqref{2.3f} $m_{x_0}(0,k)=\|x_0\|_{\C^m}^2=1$, $k\in\Z$,
we compute using \eqref{4.20} and \eqref{4.21}
\begin{align}
0&=\ln[m_{x_0}(0,k)]=\oint_{\dD}d\mu_0(\zeta)\,\Xi_{1,1}(\zeta,k) =
\f{1}{4} \oint_{\theta_1}^{\theta_*(k,x_0)} d\theta - \f{1}{4}
\oint_{\theta_*(x_0,k)}^{\theta_0+2\pi} d\theta \no \\
&=\f{1}{4}[2\theta_*(x_0,k)-\theta_0-\theta_1-2\pi], \quad k\in\Z,
\lb{4.22}
\end{align}
and hence
\begin{equation}
\theta_*(x_0,k)=\f{1}{2}(\theta_0+\theta_1)+\pi, \quad k\in\Z,
\lb{4.23}
\end{equation}
is in fact $(x_0,k)$-independent and denoted by $\theta_*$ in the
following. As a result, $\xi_{x_0}(\cdot,k)=\xi(\cdot)$ in
\eqref{4.21} and hence $m_{x_0}(\cdot,k)=m(\cdot)$ in \eqref{4.20}
are also $(x_0,k)$-independent. Recalling \eqref{2.8}, \eqref{4.17},
and \eqref{4.21} we conclude that $M_{1,1}(\cdot,k)=M_{1,1}(\cdot)$
and hence $\Xi_{1,1}(\cdot,k)=\Xi_{1,1}(\cdot)$ are $k$-independent
and
\begin{equation}
\Xi_{1,1}(\zeta)=\begin{cases} 0, &
\zeta\in\Arc\big(\big(e^{i\theta_0},e^{i\theta_1}\big)\big), \\
\f{\pi}{2}I_m, &
\zeta\in\Arc\big(\big(e^{i\theta_1},e^{i\theta_*}\big)\big), \\
-\f{\pi}{2}I_m, &
\zeta\in\Arc\big(\big(e^{i\theta_*},e^{i(\theta_0+2\pi)}\big)\big)
\end{cases}
\,\text{ for $\mu_0$-a.e.\ $\zeta\in\dD$}. \lb{4.24}
\end{equation}

In our second step we use the above explicit form of the function
$\Xi_{1,1}$ and the trace formulas obtained in Theorem \ref{t2.4} to
derive various identities for Verblunsky coefficients $\al$ which
will imply \eqref{4.13} and \eqref{4.14}.

By \eqref{4.24} the following matrix
\begin{align}
-i\oint_{\dD} d\mu_0(\zeta)\, \Xi_{1,1}(\zeta)\,{\ol \zeta} &= -i
\oint_{\te_1}^{\te_*}\frac{\pi}{2}e^{-it}\frac{dt}{2\pi} I_m +
i\oint_{\te_*}^{\te_0+2\pi}\frac{\pi}{2}e^{-it}\frac{dt}{2\pi} I_m
\no \\
&= -\frac{1}{4}e^{-i(\te_0+\te_1)/2}
\big(2+2\cos((\te_1-\te_0)/2)\big)I_m \no
\\ &=
-e^{-i(\te_0+\te_1)/2} \cos^2((\te_1-\te_0)/4)I_m
\end{align}
is a nonzero scalar multiple of the identity matrix $I_m$ since
$0<\te_1-\te_0<2\pi$. Hence, it follows from \eqref{2.19} that
$\al_k$ is nonsingular, commutes with $\al_{k+1}^*$, and
\begin{align}
\al_{k+1}^*\al_{k} = \al_{k}\al_{k+1}^* = -e^{-i(\te_0+\te_1)/2}
\cos^2((\te_1-\te_0)/4)I_m, \quad k\in\Z. \label{4.26}
\end{align}
Combining \eqref{2.2a} with \eqref{4.26} one also gets
\begin{align}
\al_{k+1}^*\wti\rho_k = \rho_k\al_{k+1}^* \,\text{ and }\,
\wti\rho_k\al_{k-1}=\al_{k-1}\rho_k, \quad k\in\Z. \lb{4.27}
\end{align}
Similarly, by \eqref{4.24}
\begin{align}
-i\oint_{\dD} d\mu_0(\zeta)\, \Xi_{1,1}(\zeta)\,{\ol \zeta}^2 &= -i
\oint_{\te_1}^{\te_*}\frac{\pi}{2}e^{-2it}\frac{dt}{2\pi} I_m +
i\oint_{\te_*}^{\te_0+2\pi}\frac{\pi}{2}e^{-2it}\frac{dt}{2\pi} I_m
\lb{4.28}
\no \\
&= \frac{1}{4}e^{-i(\te_0+\te_1)} \big(1-\cos(\te_1-\te_0)\big)I_m
\\
&= 2e^{-i(\te_0+\te_1)}
\big(\cos^2((\te_1-\te_0)/4)-\cos^4((\te_1-\te_0)/4)\big)I_m \no
\end{align}
is also a nonzero scalar multiple of the identity matrix $I_m$.
Hence, \eqref{2.19a} together with \eqref{4.26}--\eqref{4.28}
implies
\begin{align}
&\rho_{k+1}\al_{k+2}^*\wti\rho_{k+1}\al_{k} +
\al_{k+1}^*\wti\rho_{k}\al_{k-1}\rho_{k} =
\al_{k}\rho_{k+1}\al_{k+2}^*\wti\rho_{k+1} +
\wti\rho_{k}\al_{k-1}\rho_{k}\al_{k+1}^* \no
\\
&\quad = \al_{k}\rho_{k+1}^2\al_{k+2}^* +
\al_{k-1}\rho_{k}^2\al_{k+1}^* \no
\\
&\quad = 2e^{-i(\te_0+\te_1)}
\big(\cos^2((\te_1-\te_0)/4)-\cos^4((\te_1-\te_0)/2)\big)I_m, \quad
k\in\Z. \label{4.29}
\end{align}
Inserting $\rho_k^2=I_m-\al_k^*\al_k$ into the last equality of
\eqref{4.29} and simplifying the expression using \eqref{4.26} yield
\begin{align}
\al_{k}\al_{k+2}^* + \al_{k-1}\al_{k+1}^* = 2e^{-i(\te_0+\te_1)}
\cos^2((\te_1-\te_0)/4)I_m, \quad k\in\Z. \lb{4.30}
\end{align}
Multiplying both sides on the left and right by $\al_k^*$ and
$\al_{k+1}$, respectively, and using \eqref{4.26} once again, imply
\begin{align}
\al_{k}^*\al_{k} + \al_{k+1}^*\al_{k+1} =
2\cos^2((\te_1-\te_0)/4)I_m, \quad k\in\Z. \lb{4.31}
\end{align}
Then \eqref{4.26} and \eqref{4.31} imply
\begin{align}
\al_{k+1}=-e^{i(\te_0+\te_1)/2}\al_k, \quad k\in\Z, \lb{4.32}
\end{align}
since
\begin{align}
&\big(\al_{k+1}+e^{i(\te_0+\te_1)/2}\al_k\big)^*
\big(\al_{k+1}+e^{i(\te_0+\te_1)/2}\al_k\big) \no
\\
&\quad = \al_{k}^*\al_{k} + \al_{k+1}^*\al_{k+1} +
e^{i(\te_0+\te_1)/2}\al_{k+1}^*\al_{k} +
e^{-i(\te_0+\te_1)/2}\al_{k}^*\al_{k+1} \no
\\
&\quad = 2\cos^2((\te_1-\te_0)/4)I_m-2\cos^2((\te_1-\te_0)/4)I_m=0,
\quad k\in\Z.
\end{align}
Inserting \eqref{4.32} into \eqref{4.26} yields
\begin{align}
\al_k^*\al_k = \cos^2((\te_1-\te_0)/4)I_m, \quad k\in\Z. \lb{4.34}
\end{align}
Finally, defining the unitary matrix $\ga$ by
$\ga=\al_0(\al_0^*\al_0)^{-1/2}$ one obtains \eqref{4.13} and
\eqref{4.14} from \eqref{4.32} and \eqref{4.34}.
\end{proof}

\begin{remark}  \lb{r5.4}
By Theorems \ref{t3.3} and \ref{t3.2} $(ii)$ the unitary matrix $\ga$
in \eqref{4.13} is a unitary invariant that preserves the
reflectionless property, hence necessarily remains undetermined.
\end{remark}

\appendix
\section{Basic Facts on Caratheodory and Schur Functions}
\lb{sA}
\renewcommand{\theequation}{A.\arabic{equation}}
\renewcommand{\thetheorem}{A.\arabic{theorem}}
\setcounter{theorem}{0} \setcounter{equation}{0}

In this appendix we summarize a few basic facts on matrix-valued
Caratheodory and Schur functions used throughout this manuscript.
(For the analogous case of matrix-valued Herglotz functions we refer
to \cite{GT00} and the extensive list of references therein.)

We denote by $\D$ and $\dD$ the open unit disk and the
counterclockwise oriented unit circle in the complex plane $\C$,
\begin{equation}
\D = \{ z\in\C \st \abs{z} < 1 \}, \quad \dD = \{ \ze\in\C \st
\abs{\ze} = 1 \}.
\end{equation}
Moreover, we denote as usual $\Re(A)=(A+A^*)/2$ and
$\Im(A)=(A-A^*)/(2i)$ for square matrices $A$ with complex-valued
entries.

\begin{definition} \lb{dA.1}
Let $m\in\bbN$ and $F_\pm$, $\Phi_+$, and $\Phi_-^{-1}$ be $m\times
m$
matrix-valued analytic functions in $\D$. \\
$(i)$ $F_+$ is called a {\it Caratheodory matrix} if
$\Re(F_+(z))\geq 0$ for all $z\in\D$ and $F_-$ is called an {\it
anti-Caratheodory matrix} if $-F_-$ is a
Caratheodory matrix. \\
$(ii)$ $\Phi_+$ is called a {\it Schur matrix} if
$\|\Phi_+(z)\|_{\Cm} \leq 1$, for all $z\in\D$.\  $\Phi_-$ is called
an {\it anti-Schur matrix} if $\Phi_-^{-1}$ is a Schur matrix.
\end{definition}

\begin{theorem} \lb{tA.2}
Let $F$ be an $m\times m$ Caratheodory matrix, $m\in\bbN$. Then $F$
admits the Herglotz representation
\begin{align}
& F(z)=iC+ \oint_{\dD} d\Omega(\zeta) \, \f{\zeta+z}{\zeta-z}, \quad
z\in\D, \lb{A.3}
\\
& C=\Im(F(0)), \quad \oint_{\dD} d\Omega(\zeta) = \Re(F(0)),
\end{align}
where $d\Omega$ denotes a nonnegative $m \times m$ matrix-valued
measure on $\dD$. The measure $d\Omega$ can be reconstructed from
$F$ by the formula
\begin{equation}
\Omega\big(\Arc\big(\big(e^{i\te_1},e^{i\te_2}\big]\big)\big)
=\lim_{\delta\downarrow 0} \lim_{r\uparrow 1} \f{1}{2\pi}
\oint_{\te_1+\delta}^{\te_2+\delta} d\te \,
\Re\big(F\big(r\zeta\big)\big), \lb{A.4}
\end{equation}
where
\begin{equation}
\Arc\big(\big(e^{i\theta_1},e^{i\theta_2}\big]\big)
=\big\{\zeta\in\dD\,|\, \theta_1<\te\leq \theta_2\big\}, \quad
\theta_1 \in [0,2\pi), \; \theta_1<\theta_2\leq \theta_1+2\pi.
\lb{A.5}
\end{equation}
Conversely, the right-hand side of equation \eqref{A.3} with $C =
C^*$ and $d\Omega$ a finite nonnegative $m \times m$ matrix-valued
measure on $\dD$ defines a Caratheodory matrix.
\end{theorem}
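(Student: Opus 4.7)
The plan is to reduce the matrix-valued case to the classical scalar Herglotz--Riesz representation theorem on $\D$ and then reassemble a matrix-valued measure by polarization. For each $x\in\C^m$ the scalar function $f_x(z)=(x,F(z)x)_{\C^m}$ is analytic on $\D$ with $\Re(f_x)\ge 0$, so by the classical scalar Herglotz--Riesz theorem (proved, e.g., by applying Poisson's integral formula to the positive harmonic function $\Re(f_x)$ on subdisks $\{|z|<r\}$ and extracting a weak-$*$ limit of the boundary measures as $r\uparrow 1$ via Helly selection) there exist a unique finite nonnegative Borel measure $d\omega_x$ on $\dD$ and a real constant $c_x=\Im(f_x(0))$ with
\begin{equation*}
f_x(z)=ic_x+\oint_{\dD}\f{\zeta+z}{\zeta-z}\,d\omega_x(\zeta), \quad z\in\D.
\end{equation*}

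To assemble a matrix-valued measure, I would polarize. Define complex Borel measures
\begin{equation*}
d\omega_{x,y} = \tfrac14\bigl(d\omega_{x+y}-d\omega_{x-y}-i\,d\omega_{x+iy}+i\,d\omega_{x-iy}\bigr),
\end{equation*}
and apply the same identity to the sesquilinear form $(x,F(z)y)_{\C^m}$; uniqueness in the scalar Herglotz representation then forces $(x,y)\mapsto d\omega_{x,y}$ to be conjugate-linear in $x$ and linear in $y$. Applying this to the standard basis $\{e_j\}_{j=1}^m$ of $\C^m$ produces $m^2$ complex Borel measures which I would interpret as the entries of a matrix-valued Borel measure $d\Omega$ on $\dD$ satisfying $(x,d\Omega(\cdot)y)_{\C^m}=d\omega_{x,y}(\cdot)$; nonnegativity of $d\omega_{x,x}=d\omega_x$ for every $x$ forces $d\Omega\ge 0$ in the matrix sense. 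Summing the scalar representations over the basis recovers \eqref{A.3}; evaluating at $z=0$ and separating real and imaginary parts yields $C=\Im(F(0))$ and $\oint_{\dD}d\Omega=\Re(F(0))$.

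For the recovery formula \eqref{A.4}, I would take real parts in \eqref{A.3} to obtain the matrix Poisson representation $\Re(F(re^{i\te}))=\oint_{\dD}P_r(\te-t)\,d\Omega(e^{it})$ with $P_r(\te-t)=(1-r^2)/(1-2r\cos(\te-t)+r^2)$, integrate in $\te$ over $[\te_1+\de,\te_2+\de]$, apply Fubini, and exploit the standard computation that $\f{1}{2\pi}\int_{\te_1+\de}^{\te_2+\de}P_r(\te-t)\,d\te$ tends boundedly to the characteristic function of $(\te_1+\de,\te_2+\de)$ as $r\uparrow 1$; dominated convergence then delivers $\Omega(\Arc((e^{i(\te_1+\de)},e^{i(\te_2+\de)}]))$ in the limit, and sending $\de\downarrow 0$ gives \eqref{A.4}. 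The offset $\de$ is essential precisely to avoid ambiguous half-weights at possible atoms of $d\Omega$ located exactly at the endpoints $e^{i\te_j}$. The converse direction is a direct verification: $\Re\bigl(\f{\zeta+z}{\zeta-z}\bigr)=\f{1-|z|^2}{|\zeta-z|^2}\ge 0$ shows the integrand in \eqref{A.3} is a nonnegative self-adjoint $m\times m$ matrix for every $\zeta\in\dD$ and $z\in\D$, and integration against $d\Omega\ge 0$ preserves this nonnegativity, while analyticity in $z$ follows from dominated convergence using $|\zeta-z|^{-1}\le(1-|z|)^{-1}$ uniformly in $\zeta$.

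The main obstacle will be the polarization step: one must verify that the polarized complex measures truly assemble into a countably additive $m\times m$ matrix-valued measure (not merely an array of signed measures), confirm sesquilinearity (rather than bilinearity) of $(x,y)\mapsto d\omega_{x,y}$ from the scalar uniqueness, and propagate scalar positivity to matrix-valued positivity of $d\Omega$, i.e.\ show that $(x,d\Omega(B)x)_{\C^m}\ge 0$ for every Borel $B\subseteq\dD$ and every $x\in\C^m$. All of these are standard but require careful bookkeeping, and similar attention is needed in justifying the exchange of limits in the recovery formula \eqref{A.4} under the iterated $\de\downarrow 0$, $r\uparrow 1$ limits.
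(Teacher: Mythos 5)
The paper offers no proof of Theorem \ref{tA.2}: it is stated in Appendix \ref{sA} as a standard fact, with a pointer to the literature on matrix-valued Herglotz functions (e.g.\ \cite{GT00}), so there is nothing internal to compare your argument against. Your route --- scalar Herglotz--Riesz for the quadratic forms $f_x=(x,Fx)_{\C^m}$, polarization to build the matrix measure, Poisson-kernel Stieltjes inversion for \eqref{A.4}, and the positivity of $\Re\big(\tfrac{\zeta+z}{\zeta-z}\big)$ for the converse --- is exactly the standard proof of this result, and it goes through. One justification, however, is stated incorrectly and should be repaired: you cannot invoke ``uniqueness in the scalar Herglotz representation'' for the off-diagonal functions $(x,F(z)y)$, because a representation $g(z)=ic+\oint\frac{\zeta+z}{\zeta-z}\,d\nu(\zeta)$ with a general \emph{complex} measure $\nu$ is not unique (for instance $\oint\frac{\zeta+z}{\zeta-z}\,\ol{\zeta}\,d\mu_0(\zeta)\equiv 0$ on $\D$, since all Fourier coefficients $\oint\ol\zeta^{\,j}\ol\zeta\,d\mu_0$ with $j\ge 0$ vanish). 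Uniqueness holds only for nonnegative representing measures, i.e.\ for the diagonal functions $f_u$. The correct way to get sesquilinearity of $(x,y)\mapsto d\omega_{x,y}$ is to note that $u\mapsto f_u$ is a quadratic form, so $f_{u+v}+f_{u-v}=2f_u+2f_v$ and $f_{\lambda u}=|\lambda|^2f_u$; by uniqueness of \emph{nonnegative} representing measures these identities pass to $u\mapsto d\omega_u$, and the Jordan--von Neumann polarization argument then yields sesquilinearity of the polarized measures. With that substitution your proof is complete; the remaining bookkeeping you flag (countable additivity entrywise, matrix positivity from $d\omega_{x,x}\ge 0$, the half-weight analysis at atoms in the iterated limits of \eqref{A.4}) is routine and your treatment of it is correct.
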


We note that additive nonnegative $m\times m$ matrices on the
right-hand side of \eqref{A.3} can be absorbed into the measure
$d\Om$ since
\begin{equation}
\oint_\dD d\mu_0(\zeta) \, \f{\zeta+z}{\zeta-z}=1, \quad z\in\D,
\lb{A.5a}
\end{equation}
where
\begin{equation}
d\mu_0(\zeta)=\f{d\te}{2\pi}, \quad \zeta=e^{i\te}, \; \te\in
[0,2\pi),   \lb{A.5b}
\end{equation}
denotes the normalized Lebesgue measure on the unit circle $\dD$.

Given a Caratheodory (resp., anti-Caratheodory) matrix $F_+$ (resp.
$F_-$) defined in $\D$ as in \eqref{A.3}, one extends $F_\pm$ to all
of $\bbC\backslash\dD$ by
\begin{equation}
F_\pm(z)=iC_\pm \pm \oint_{\dD} d\Om_\pm (\zeta) \,
\f{\zeta+z}{\zeta-z}, \quad z\in\bbC\backslash\dD, \;\;
C_\pm=C_\pm^*. \lb{A.6}
\end{equation}
In particular,
\begin{equation}
F_\pm(z) = -F_\pm(1/\ol{z})^*, \quad z\in\C\backslash\ol{\D}.
\lb{A.7}
\end{equation}
Of course, this continuation of $F_\pm|_{\D}$ to
$\bbC\backslash\ol\D$, in general, is not an analytic continuation
of $F_\pm|_\D$.

Next, given the functions $F_\pm$ defined in $\bbC\backslash\dD$ as
in \eqref{A.6}, we introduce the functions $\Phi_\pm$ by
\begin{equation}
\Phi_\pm(z)=[F_\pm(z)-I_m][F_\pm(z)+I_m]^{-1}, \quad
z\in\bbC\backslash\dD.  \lb{A.11}
\end{equation}
We recall (cf., e.g., \cite[p.\ 167]{SF70}) that if $\pm \Re(F_\pm)
\geq 0$, then $[F_\pm \pm I_m]$ is invertible. In particular,
$\Phi_+|_{\D}$ and $[\Phi_-]^{-1}|_{\D}$ are Schur matrices (resp.,
$\Phi_-|_{\D}$ is an anti-Schur matrix). Moreover,
\begin{equation}
F_\pm(z)= [I_m-\Phi_\pm (z)]^{-1} [I_m+\Phi_\pm (z)] \lb{A.12}
\end{equation}
and
\begin{align}
\Phi_\pm(1/\ol{z}) = \big[\Phi_\pm(z)^*\big]^{-1}, \quad
z\in\bbC\backslash\dD. \lb{A.13}
\end{align}
\medskip

{\bf Acknowledgments.} We are indebted to Fritz Gesztesy and Eric
Ryckman for valuable comments and helpful discussions on this topic.


\end{document}